\let\color@begingroup\relax
   \let\color@endgroup\relax}{}%
\def\fix@ieeecolor@hbox#1{%
  \hbox{\color@begingroup#1\color@endgroup}}
\patchcmd\@makecaption{\hbox}{\fix@ieeecolor@hbox}{}{\FAILED}
\patchcmd\@makecaption{\hbox}{\fix@ieeecolor@hbox}{}{\FAILED}
\def\BibTeX{{\rm B\kern-.05em{\sc i\kern-.025em b}\kern-.08em
    T\kern-.1667em\lower.7ex\hbox{E}\kern-.125emX}}
\newcommand{\NR}{Newton-Raphson\xspace}
\newcommand{\GF}{\mathcal{G}\mathcal{F}}
\theoremstyle{remark}
\newtheorem{remark}{Remark}
\newtheorem{definition}{Definition}
\newtheorem{assumption}{Assumption}
\newtheorem{proposition}{Proposition}
\newtheorem{lemma}{Lemma}
\newtheorem{theorem}{Theorem}
\def\d{\partial}
\def\mR{\mathbb{R}}
\def\domP{\mathcal{P}}
\begin{document}

\title{Numerical Solution of the Steady-State Network Flow Equations for a Non-Ideal Gas}

\author{Shriram Srinivasan$^{\dagger}$, Kaarthik Sundar$^{*}$, Vitaliy Gyrya$^{\dagger}$, Anatoly Zlotnik$^{\dagger}$
\thanks{$^{\dagger}$Applied Mathematics and Plasma Physics Group, Los Alamos National
Laboratory, Los Alamos, New Mexico, USA. E-mail: \texttt{\{shrirams,vitaliy\_gyrya,azlotnik\}@lanl.gov}}\;
\thanks{$^{*}$Information Systems and Modeling Group, Los Alamos National
Laboratory, Los Alamos, New Mexico, USA. E-mail: \texttt{{kaarthik}@lanl.gov}}\;
\thanks{The authors acknowledge the funding provided by LANL’s Directed Research and Development (LDRD) project: ``20220006ER: Fast, Linear Programming-Based Algorithms with Solution Quality Guarantees for Nonlinear Optimal Control Problems'' and by the U.S. Department of Energy's Advanced Grid Modeling (AGM) projects Joint Power System and Natural Gas Pipeline Optimal Expansion and Dynamical Modeling, Estimation, and Optimal Control of Electrical Grid-Natural Gas Transmission Systems. The research work conducted at Los Alamos National Laboratory is done under the auspices of the National Nuclear Security Administration of the U.S. Department of Energy under Contract No. 89233218CNA000001.}\; 
}

\maketitle

\begin{abstract}
We formulate a steady-state network flow problem for non-ideal gas that
relates injection rates and nodal pressures in the network  to flows in pipes.
For this problem, we present and prove a theorem on uniqueness of generalized solution
for a broad class of non-ideal pressure-density relations that satisfy a monotonicity property. 
Further, we develop a \NR  algorithm for numerical solution of the steady-state problem, which is made possible by a systematic non-dimensionalization of the equations.  The developed algorithm has been extensively  tested on benchmark instances and shown to converge robustly to a generalized solution.
Previous results  \cite{de2000gas, ojha2017solving,Singh2019,Singh2020}, indicate that the steady-state network flow equations for an ideal gas are 
difficult to solve by the \NR method because of its extreme sensitivity to the initial guess. 
In contrast, we find that non-dimensionalization of the steady-state problem is key to robust convergence of  the \NR  method.
We identify criteria based on the uniqueness of solutions under which the existence of a non-physical generalized solution found by a non-linear solver implies non-existence of a physical solution, i.e., infeasibility of the problem. 
Finally, we compare pressure and flow solutions based on ideal and non-ideal equations of state to demonstrate the need to apply the latter in practice. 
The solver developed in this article is open-source and is made available for both the academic and research communities as well as the industry.
\end{abstract}

\begin{IEEEkeywords}
steady-state, network flow equations, non-ideal gas, \NR, compressibility factor
\end{IEEEkeywords}
\IEEEpeerreviewmaketitle

\def\d{\partial}

%----------------------------
\section{Introduction} \label{sec:intro}

Over the past decade, the increase in new gas-fired electricity generation and concurrent growth of renewable energy sources in the electric power grid has led to the increasing reliance on natural gas to fuel base generation and to balance out load fluctuations \cite{li2008interdependency,roald2020uncertainty}. 
Moreover, natural gas also serves as a critical energy source for domestic heating in the U.S. \cite{rios2015optimization}. 
In the United States, natural gas consumption reached a record high of 85 billion cubic feet per day in 2019 \cite{EIA-1}. 
Furthermore, data from the U.S. Energy Information Administration indicates that natural gas accounts for the largest share of generation fuel  since surpassing coal on an annual basis in 2016 \cite{EIA-2}. 
Natural gas production sites are usually situated in remote locations and are geographically well separated from consumption locations. 
An infrastructure network of pipelines is used to transport the natural gas from gathering and processing facilities to consumers. 
Natural gas system operators work to ensure safe and reliable transport of gas through pipelines, and utilize various decision support tools for system design and operation that include simulation and  optimization. 

Many optimization problems for gas pipeline systems have been formulated and examined.  
Such problems aim to minimize or maximize an objective function involving a cost or performance index subject to constraints that represent the physics of transient or steady-state natural gas flows.  
Examples include optimal network expansion \cite{Zheng2010, borraz2016convex,sundar2021robust}, integrated gas-electric system operations \cite{Raheli2021,He2018,roald2020uncertainty}, day-to-day operations of natural gas pipeline with \cite{hari2021operation} and without storage \cite{zlotnik2019optimal,wu2017adaptive}, state and parameter estimation \cite{Behrooz2015,Jalving2018,sundar2018state,sundar2019dynamic}, and compressor power minimization \cite{zlotnik2015optimal,Deng2019}.  

The usual operating dynamics of natural gas flow through pipelines in a physical regime without waves and shocks can be adequately described by a system of coupled partial differential equations (PDE) in density, pressure and mass flux variables \cite{GyryaZlotnik2017}.  
These equations represent balance of mass and momentum, where momentum dissipation is modeled using the Darcy-Wiesbach friction approximation, which is quadratic in velocity. It is well-understood that an isothermal approximation is adequate in this regime \cite{osiadacz01}.  The equation of state is specified as either a linear (ideal)  or nonlinear (non-ideal) model that relates gas pressure and density. 
Under steady-state conditions and ideal gas assumptions, the system of coupled PDEs reduce to a non-linear system of algebraic equations that relate pressure and mass flow values throughout the pipeline network.  
While gas flows in real pipelines do in general undergo temporal variation, mid-term and long-term planning questions are in practice often examined using steady-state models, even though simulation tools for transient models are available \cite{koch15}. On the one hand, this is because the network sizes currently solvable by transient methods are small.  On the other hand, in mid- or long-term planning, future nomination profiles and their time-dependence may not be known. Instead, fictitious future nominations are considered, where load flows and external conditions are assumed to be constant over a fixed time period. Moreover, when restricted to steady-state flows, nominations can be aggregated over a day, for example. 
Furthermore, many optimization studies for gas pipeline networks that feature slow variations over a time horizon  simplify flows to sequences of steady-state problems at discrete time instants within the time horizon \cite{gugat2021transient}.

In this study, we derive the corresponding steady-state equations for a particular case of a non-ideal pressure-density relation.  
These equations capture the intuitive fact that as natural gas flows along a pipeline, the pressure drops non-linearly along its length.  
As a result of that property, pressures in a pipeline are maintained within a certain range of required values using compressors.  
The effects of compressor stations can be modelled as either multiplicative  \cite{zlotnik2015optimal} or additive \cite{sundar2021robust,vuffray15cdc} factors, which scale up the pressure of the gas between the station inlet and outlet. 

The gas flow ($\GF$) problem examined in this study is as follows. 
\underline{Given} (i) and (ii):
\noindent  
(i) Either the pressure 
or the mass flow rate of injection/extraction at every node; and 
(ii) compression ratios for each compressor in the network;
\noindent  
\underline{find} the injection/extraction mass flow rates (at the nodes where pressure is specified),
the pressures (at the nodes where injection/extraction mass flow rates are specified) in the system, and the mass flows across all the pipes and compressors that satisfy the non-linear steady-state equation system governing the flow of natural gas in the entire network.
Methods for rapidly and accurately solving this problem at scale \cite{kenworgu08phd,koch15} are critical for evaluating feasible flow configurations and capacities of natural gas transmission pipeline networks .

The $\GF$ problem has been addressed previously under the ideal gas assumption, although in the high pressure conditions of transport pipeline flow, natural gas will not behave according to this assumption \cite{GyryaZlotnik2017}. 
Recent studies have proposed Mixed-Integer Quadratically Constrained Quadratic Program (MI-QCQP) \cite{Singh2020}, Mixed-Integer Second Order Cone Program (MI-SOCP) \cite{Singh2019}, and Semi-Definite Program (SDP) approaches \cite{ojha2017solving} to solve the $\GF$ problem. 
All these methods are computationally expensive and do not scale well for larger networks.
In particular, \cite{Singh2020} develops a convex relaxation for the ideal gas case only, but  is proved to be exact under some assumptions on the networks, namely, a single slack node,  no cycles with shared edges, and no circulatory flows in cycles.
The \NR formulation has no such restrictions, and we show that it works successfully with multiple slack nodes and overlapping cycles.
Moreover, an MI-QCQP is an NP-hard problem, and cannot scale well for large instances. This is also reflected in the reported computation times which are $10^2-10^3$ times that of the \NR performance for the GasLib-40 instances in this study.
Another study \cite{de2000gas} has proposed a primal-dual based method to solve $\GF$ problem on networks without compressors, which is huge limitation of the method.
These studies all conclude that a standard \NR solver does not converge without specific initialization, and performs poorly in comparison to the respective methods proposed therein.  One study \cite{Singh2020} states that an approach that makes \NR work effectively for the $\GF$ problem, and independently of its initialization, remains an open problem of interest.  We aim to address this issue comprehensively in this study.

This article makes four contributions.  
First, we derive the non-linear steady-state equations for the $\GF$ problem assuming a general equation of state for the gas, and consider a particular form that is a generalization of the ideal gas equation. 
Second, we demonstrate that uniqueness results derived for the system of equations under the ideal gas assumption \cite{Singh2019}, \cite{Singh2020} generalize under weaker hypotheses to any non-ideal gas based solely on the monotonicity property of the pressure drop along the pipes \cite{misra2020monotonicity}. 
Third, we demonstrate that the standard \NR algorithm with random initialization reliably converges to the unique solution of the $\GF$ problem when the steady-state equations have been suitably non-dimensionalized.  
We claim that this result resolves the open research question (identified in \cite{Singh2019}) of developing an \NR type  $\GF$ solver that performs well for large pipeline networks without requiring special procedures for initialization. 
Finally, we present extensive numerical simulations on benchmark test instances that 
(i) compare the $\GF$ solutions obtained for ideal and non-ideal equations of state; 
(ii) show the effectiveness of the proposed solver for the $\GF$ problem in terms of computation time and the number of iterations required for convergence; and 
(iii) show the importance of appropriate non-dimensionalization to enable convergence of the algorithm. 

The rest of the article is organized as follows: Section \ref{sec:ss-pipe} develops the equations that govern the steady flow of a non-ideal gas for a single pipe. Section \ref{sec:non-dimensionalization} presents a generic technique to non-dimensionalize the steady-state governing equations developed in Section \ref{sec:ss-pipe}, and is followed by the formulation of the $\GF$ problem on a pipeline network in Section \ref{sec:gf-network}. Section \ref{sec:gf-network} also includes a discussion of uniqueness of the $\GF$ problem solution for a non-ideal gas. Section \ref{sec:algo} presents a discussion of the \NR algorithm we use to solve the $\GF$ problem. Extensive computational experiments are presented in Section \ref{sec:results} that corroborate the effectiveness of the \NR algorithm on a wide class of standard benchmark instances and finally, the article concludes with Section \ref{sec:conclusion}. 

%----------------------------
\section{Steady gas flow in a pipe} \label{sec:ss-pipe}
The adiabatic flow of compressible gas in a single pipeline is described by the Euler equations in one dimension \cite{thorley87}. 
Inertial terms for long pipelines can be ignored, which leads to the following equations (see \cite{GyryaZlotnik2017}) that represent conservation of mass and momentum balance:
\begin{subequations}
\begin{flalign}
  \frac{\d\rho}{\d t}+\frac{\d \varphi}{\d x} & = 0, 
  \label{eq:euler1a} \\
  \frac{\d\varphi}{\d t} + \frac{\d p}{\d x}  & = 
  -\frac{\lambda}{2D}\frac{\varphi |\varphi|}{\rho}, 		
  \label{eq:euler1b}
\end{flalign}
\label{eq:euler1}
\end{subequations}
%\noindent 
where $\rho$ is the density, $p$ is pressure, $\varphi=\rho v$ is mass flux, and $v$ is velocity of the gas. 
The additional parameters are the friction factor $\lambda$ and diameter $D$ of the pipe.  
The term on the right hand side of \eqref{eq:euler1b} quantifies the energy dissipation caused by friction in turbulent flow. We have ignored effects related to change in elevation for simplicity of presentation, and also because natural gas pipelines ( e.g., of the continental United States) are nearly horizontal on the large scale. 
The mass and momentum conservation equations are supplemented with the equation of state (EoS), which relates the density $\rho$ and the pressure $p$ of the gas as
\begin{align}\label{eq:eos}
    p = Z(p, T) R_g T \rho
    \quad
    \text{or}
    \quad
    \rho(p)
    =
    \frac{p}{Z(p, T) R_g T},
\end{align}
where $R_g$ is the specific gas constant, $T$ is gas temperature, and $Z$ is the compressibility factor that may depend on pressure and temperature for a non-ideal gas.
The compressibility factor $Z(p,T)$ is typically given as a formula with parameters that have been fitted to measured data obtained during early engineering studies \cite{benedict1940empirical,elsharkawy2004efficient}. 
A widely-used formula for $Z(p, T)$ is the CNGA EoS \cite{menon2005gas}, given by
\begin{flalign}\label{eq:cnga}
    Z(p, T) = \frac 1{b_1 + b_2 p} 
\end{flalign}
where, $b_1$ and $b_2$ are gas and temperature-dependent constants. The values of $b_1$ and $b_2$ are given by the following expressions:
\begin{align}
  b_1 = 1 + \left( \dfrac{p_{atm}}{6894.75729}\right) 
  \left( \dfrac{a_1 10 ^ {a_2 G}}{(1.8T) ^ {a_3}} \right) ~~(\text{\si{unitless}}),\label{eq:b1}\\
    b_2 = \left(\dfrac{1}{6894.75729}\right)
    \left(\dfrac{a_1 10 ^ {a_2 G}}{(1.8T)^{a_3} } \right) ~~(\text{\si{\per\pascal}}). \label{eq:b2}
\end{align}
Here, $b_1$ and $b_2$ are calculated in terms of other non-dimensional constants $a_1 = 344400$, $a_2 = 1.785$, $a_3 = 3.825$, specific gravity of natural gas $G = 288.706$ and atmospheric pressure $p_{atm} = 101350\; \si{\pascal}$. 

Here we are interested in  steady-state solutions of Eq. \eqref{eq:euler1} under isothermal conditions.
Note that since the temperature is constant, we shall write $p(\rho)$ and $\tfrac{dp}{d\rho}$ for simplicity instead of $p(\rho, T)$ and $\tfrac{\partial p}{\partial \rho}$.
Thus, setting time derivative terms to zero, the conservation of mass equation yields
\begin{flalign}
    \dfrac{d \varphi}{d x} = 0 
    \quad \implies \quad 
    \varphi = \text{constant}. \label{eq:constant-flux}
\end{flalign}
The above equation indicates that in a steady-state regime, the mass flux across the cross-sectional area of the pipe is a constant throughout the length of the pipe.
Assuming a constant mass flow $f = A\varphi$ in a pipe with cross-sectional area $A$, the momentum balance equation can be rewritten in terms of mass flows instead of mass fluxes as
\begin{align}
    \dfrac{dp}{dx} 
    = 
    -\dfrac{\lambda}{2D} \dfrac{\varphi |\varphi|}{\rho(p)} 
    = 
    - \dfrac{\lambda}{2DA^2} \dfrac{f |f|}{\rho(p)}. \label{eq:ode}
\end{align}

In order to account for the effect of a nonlinear EoS on pipe flow, we define a potential function $\Pi(p)$
as a solution of the differential equation
\begin{flalign} \label{eq:pi-def}
    \frac{d \Pi(p)}{d p} :=  \rho(p)
    \ \ \implies \ \ 
    \Pi(p) = \Pi(p_0) + \int_{p_0}^p \rho(\tilde p)\ d\tilde p,
    %\Pi(p) \triangleq \int_{0}^{p} \rho(p) dp 
\end{flalign}
which is defined uniquely up to an additive constant.

Multiplying Eq. \eqref{eq:ode} by $\rho(p)$ and substituting \eqref{eq:pi-def} results in
\begin{align}\label{eq:meaning of potential}
    \frac{d \Pi(p)}{d p} \dfrac{dp(x)}{dx} 
    =
    \dfrac{d \Pi(p(x))}{dx} 
    =
    - \dfrac{\lambda f |f|}{2DA^2}. 
\end{align}
Integrating \eqref{eq:meaning of potential} along the length of the pipe with the end-point data labelled with subscripts  
$1$ and $2$ and constant mass flow $f$ directed from point $x_1$ to $x_2$, the end-point pressures $p_1$ and $p_2$ are related by  
%labelled with subscripts $1$ and $2$ and ,  the first-order differential equation \eqref{eq:ode} can be integrated to yield 
\begin{flalign}
    \Pi(p_2) - \Pi(p_1)  = - L \dfrac{\lambda f |f|}{2DA^2}, \label{eq:gen-non-ideal-physics-flows}
\end{flalign}
where $L=|x_2-x_1|$ is the length of the pipe. 
%and $p_1$ and $p_2$ denote the pressure values at the ends $1$ and $2$ respectively. 
If $f$ is negative in Eq. \eqref{eq:gen-non-ideal-physics-flows}, 
then the direction of flow is from $x_2$ to $x_1$.

The first derivative of the potential function $\Pi$ is the density, as apparent from the definition Eq. \eqref{eq:pi-def}. The formula for the second derivative  of $\Pi$ turns out to  be the product of density (positive) and a physical quantity that is defined as isothermal compressibility of  the gas (which is positive) and hence $\Pi"$ is positive for positive densities according to
% \begin{flalign}\label{eq:sound-speed}
% c^2 = \frac{d p(\rho)}{d \rho} \text{ so that } \Pi''(p) = \frac{d \rho(p)}{d p} = \frac{1}{c^2} > 0.
% \end{flalign}
%
% \begin{flalign}\label{eq:sound-speed_a}
% c^2 = \frac{d p(\rho)}{d \rho},
% \end{flalign}
% so that
%
\begin{flalign}\label{eq:sound-speed_b}
\Pi''(p) = \frac{d \rho(p)}{d p} = \frac{\partial \rho(p, T)}{\partial p} > 0.
\end{flalign}

%-----------------------------------------------
\subsection{Simplifications for specific equations of state} \label{subsec:specific_eos}

We now derive the form of Eq. \eqref{eq:gen-non-ideal-physics-flows} for two particular equations of state that are of interest in the context of natural gas.

For the CNGA EoS, the dependence between $\rho$ and $p$ is precisely defined  by combining Eq. \eqref{eq:eos} and \eqref{eq:cnga} as 
\begin{flalign}
    \rho = \frac{b_1 p + b_2 p^2}{R_gT}. \label{eq:rho(p)}
\end{flalign}

We denote by $a > 0$ the fixed quantity $a = \sqrt{R_gT}$. Note that for an ideal gas, $\frac{d \rho(p)}{d p} = \frac 1{R_gT}$. Rewriting Eq. \eqref{eq:rho(p)} for isothermal conditions using $a$, we obtain
\begin{flalign}
    \rho = \frac{b_1 p + b_2 p^2}{a^2}. \label{eq:rho-a}
\end{flalign}
For the above form of the equation of state $\rho(p)$,  
Eq. \eqref{eq:gen-non-ideal-physics-flows} simplifies to
\begin{flalign}
\frac{b_1}{2}(p_2^2 - p_1^2) + \frac{b_2}{3}( p_2^3 - p_1^3) = - \dfrac{L\lambda a^2}{2DA^2} f |f|. \label{eq:non-ideal-physics-flows}
\end{flalign}

For an ideal gas, the EoS in Eq. \eqref{eq:eos} is linear with compressibility factor $Z = 1$. In that case, the equation that governs the flow of gas through the pipe under steady-state conditions can be obtained by setting $b_1 = 1$ and $b_2 = 0$ in Eq. \eqref{eq:non-ideal-physics-flows}, i.e., 
\begin{flalign}
p_2^2 - p_1^2 = - \dfrac{L\lambda a^2}{DA^2} f |f|. \label{eq:ideal-physics-flows}
\end{flalign}
The existing gas flow solvers in the literature use the above equation to develop computational methods to solve the $\GF$ problem. The form of Eq. \eqref{eq:ideal-physics-flows}  was exploited by the authors in \cite{Singh2019,Singh2020,borraz2016convex}  to develop a Mixed-Integer Second-Order Cone relaxation. In the next section, we present a systematic technique to non-dimensionalize Eq. \eqref{eq:gen-non-ideal-physics-flows}, which will later play a key role in developing solution techniques to solve the $\GF$ problem on a network of pipelines. 

%---------------------------------------------------------------------------------------------------------------------------
\section{Non-dimensionalization of the governing equations for a single pipe} \label{sec:non-dimensionalization}

Non-dimensionalization of the governing equations is essential to avoid an ill-scaled problem when looking for  a numerical solution of the gas flow equations. 
This is crucial because operating pressures of natural gas pipelines are in the range \SIrange{3}{7}{\mega\pascal}  and the injection and withdrawal rates of natural gas at the production and consumer locations can be in the range  \SIrange{5}{300}{\kilogram\per\second}. Thus, the flow and pressure variables differ by orders of magnitude.  
Furthermore, it has been observed that using the standard \NR algorithm for the governing equations does not lead to convergence \cite{Singh2019,Singh2020} even when the ideal gas EoS is applied.  In the subsequent paragraphs, we present a generic technique to non-dimensionalize the governing equation in Eq. \eqref{eq:gen-non-ideal-physics-flows}. Nondimensional quantities  denoted by an overbar are defined by scaling a physical quantity with an appropriate nominal value.

We first select the nominal length, pressure, density, and velocity as $l_0,  p_0, \rho_0$, and $v_0$ respectively, and then set the nominal mass flux to be $\varphi_0 = v_0 \rho_0$. 
To obtain the non-dimensional equation in terms of mass flows and a non-dimensional function $\bar{\Pi}$ defined as in Eq. \eqref{eq:pi-def}, we set the nominal area to $A_0=1$, so that $\bar A = A$ and the non-dimensional mass flow is $\bar f = \bar A \bar{\varphi}$. 
We chose the base value $A_0 =1$ so that the nominal mass flux and nominal mass flows have identical values,  though it is possible to choose $A_0$ to be some other value.
Then setting $\bar x = x/l_0$, $\bar p = p/p_0$, $\bar{\varphi} = \varphi/\varphi_0$, and $\bar D = D/l_0$, leads $\bar{f}$ to be a constant, and Eq. \eqref{eq:ode} reduces to 
\begin{flalign}
\bar \rho \dfrac{d\bar p}{d\bar x} = -\frac{\lambda}{2\bar D \bar A^2} \bar{f} | \bar{f} |  \left(\frac{\rho_0 v_0^2}{p_0}\right). \label{eq:ode-nd-0}
\end{flalign}
In the above equation, if we let $\mathcal M = v_0/a$ denote the Mach number of the nominal flow velocity and $\mathcal C = p_0/\left(\rho_0 a^2\right)$ be a constant analogous to the Euler number, then Eq. \eqref{eq:ode-nd-0} can be rewritten as 
\begin{flalign}
\bar \rho \dfrac{d\bar p}{d\bar x} = -\frac{\mathcal M^2}{\mathcal C} \frac{\lambda}{2\bar D \bar A^2} \bar{f} | \bar{f} |. \label{eq:ode-nd}
\end{flalign}
Then if we define
\begin{flalign}
\bar\Pi(\bar p) \triangleq \bar\Pi(\bar p_0) + \int_{\bar p_0}^{\bar p} \bar \rho(\bar p) d\bar p, \label{eq:pi-def-nd}
\end{flalign}
we obtain the nondimensional counterpart to Eq. \eqref{eq:gen-non-ideal-physics-flows} as
\begin{flalign}
\bar\Pi(\bar p_2) - \bar\Pi(\bar p_1) = -\frac{\mathcal M^2}{\mathcal C} \frac{\lambda \bar L}{2\bar D \bar A^2} \bar{f} | \bar{f} |, \label{eq:gen-non-ideal-physics-flows-nd}
\end{flalign}
where  $\bar L = L/l_0$ is the non-dimensional length of the pipe.

%For simulations, we shall focus on  two specific forms of EoS, the CNGA and ideal gas equation.
We compare CNGA and ideal gas EoS as representatives of two different forms of the EoS in our numerical studies.
After converting the variables in the density-pressure relationship in Eq. \eqref{eq:rho-a} to dimensionless quantities, we obtain
\begin{flalign}
\bar{\rho} = \bar b_1\bar p + \bar b_2 \bar p^2,  \label{eq:rho-nd}
\end{flalign}
where $\bar b_1 = \mathcal C b_1$ and $\bar b_2 = \mathcal C p_0 b_2$.  Substituting the result into Eq. \eqref{eq:gen-non-ideal-physics-flows-nd} yields
\begin{flalign}
\frac{\bar b_1}{2}(\bar p_2^2 - \bar p_1^2) + \frac{\bar b_2}{3}(\bar p_2^3 - \bar p_1^3) = - \frac{\mathcal M^2}{\mathcal C} \dfrac{\lambda \bar L}{2\bar D \bar A^2} \bar f |\bar f|. \label{eq:non-ideal-physics-nd}
\end{flalign}

As expected, setting $\bar{b}_1 = 1$ and $\bar{b}_2 = 0$ in Eq. \eqref{eq:non-ideal-physics-nd} yields the non-dimensionalized equations for an ideal gas. As an aside, note that the nominal quantities are usually chosen to ensure that the non-dimensional variables have similar orders of magnitude.  If the intent however is to have the transformed equations appear without the Euler and Mach numbers \cite{sundar2018state,hari2021operation}, one can choose $v_0 =a$ and $\rho_0 = p_0/a^2$ to ensure that $\mathcal{M} = \mathcal{C} = 1$.

Note that while we are free to choose the values of $l_0, p_0, \rho_0$, and $v_0$, we aim for a choice that will lead to good scaling in the $\GF$ problem and superior performance of any solver used for it on large pipeline networks. In the next paragraph, we present guidelines for choosing $l_0$, $p_0$, $\rho_0$ and $v_0$ which have been used for re-scaling the problem and  subsequently lead to successful convergence of the $\GF$ solver in the computational experiments in our study. \emph{In subsequent discussions, for ease of presentation, we shall drop the overbar that designates non-dimensional quantities with the understanding that all quantities are dimensionless.}

\subsection{Guidelines for choosing nominal values} 

We first start by choosing a value of nominal length $l_0$ that is representative of the typical length of a single pipe, which is usually in the range of \SIrange{1000}{10000}{\meter}. 
Any value of nominal length that occurs in the range of pipeline lengths for a given network works well.
The value of nominal pressure is usually decided based on the range of operating pressure in the pipeline network.  For transmission networks, this range is \SIrange{3}{7}{\mega\pascal}. 
For a given network, a non-zero slack pressure can be used  as the nominal value.
Since  $a = \sqrt{R_gT}$  has the dimension of velocity, the nominal velocity is conveniently expressed as a factor of it. The nominal velocity $v_0$ is chosen such that the quantity $\rho_0 v_0$ is of the same order of magnitude as some nodal injection or withdrawal.
Once $l_0, p_0, v_0$ are selected, the nominal density, mass flux, and mass flow rates are computed as
\begin{flalign}
\rho_0 = \frac{p_0}{a^2},
\varphi_0 = \rho_0 v_0, \text{ and } f_0 = \varphi_0 A_0, \label{eq:nominal-quantities}
\end{flalign}
where $A_0 = 1$. 
% \textcolor{red}{Later in the section on computational results, we explicitly provide the values of $l_0$, $p_0$ and $v_0$ that were chosen for all the test instances.} 
The choice of nominal scaling factors is subjective, and there may be other ways to identify scaling factors that ensure the $\GF$ problem scales well in the sense that non-dimensional variables have similar orders of magnitude. 
However, the procedure we have described here to choose sensible nominal values is general and uses nothing more than the data given for a network.
In the next section, we present the formulation for the $\GF$ problem for a pipeline network with multiple pipes and compressors. 

\section{Gas flow problem on a network} \label{sec:gf-network}
A gas pipeline network consists of pipes, junctions, and compressors. A junction is a physical location where one or more pipes and compressors come together. In previous studies, we find two equivalent ways of defining a compressor. In the \emph{endpoint} representation  \cite{GyryaZlotnik2017,Singh2020}, compressors are associated with the ends of pipes and each compressor can provide a prescribed pressure boost ratio. The notion of a trivial compressor (a compressor with a pressure boost ratio of 1) is used, and this is equivalent to having no compressor at the end of the pipe. Thus, without loss of generality, one can assume that each pipe has two controllers, with one at each end.  That mechanism enables modeling of any type of pressure or flow boundary condition at either end of a pipe, which is useful for representing valve actions in addition to compression in transient simulation. Alternatively, in the \emph{node-connecting} representation \cite{hari2021operation,borraz2016convex}, compressors are associated with a pair of junctions and could be viewed as a pipe with zero length and prescribed pressure boost ratio. This representation simplifies the overall equations governing the flow of natural gas through networks of pipes and compressors in the context of steady-state modeling.

%whereas in the former the governing equations for a pipe are modified suitably to generalize to a pipe with compressors at each end and this necessitates the introduction of trivial compressors throughout the system. 
Hence, in this article,  we use the node-connecting representation for algorithmic simplicity and ease of exposition. Note that it is possible to perform a conversion between these two representations of a compressor. 
The gas consumption caused by compressors powered using gas extracted from the network is actually an extremely small fraction (less than 0.005 \cite{Herty2007}) of the incoming flow, and hence neglected in the modelling. Moreover, compressor stations are typically located several dozen kilometers apart, and thus they are not numerous enough to make the cumulative consumption significant either.

We begin by setting up our notation. We let $G = (N, P \cup C)$ denote the pipeline network, where $N$, $P$ and $C$ are the set of junctions, pipes and compressors respectively. A pipe $(i, j) \in P$ connects the junctions $i$ and $j$. Also, a compressor $(i, j) \in C$ connects junctions $i$ and $j$, which we assume are  geographically co-located. For each junction $i \in N$ we let $p_i$ and $q_i$ denote the non-dimensional pressure and injection at $i$ respectively. If $q_i < 0$, this indicates that gas is being withdrawn from the network at $i$ at a rate of $|q_i|$. For each pipe $(i, j)$ that connects junctions $i$ and $j$, we let $f_{ij}$ denote the non-dimensional steady flow through the pipe and $\lambda_{ij}$, $A_{ij}$, $D_{ij}$ and $L_{ij}$ denote the friction factor, cross-sectional area, diameter and length of the pipe respectively. it is assumed that all  parameters of the pipe are non-dimensionalized. Finally, for each compressor $(i, j) \in \mathcal C$, we use $\alpha_{ij}$ to denote the pressure boost ratio or compressor ratio, and use $f_{ij}$ to denote the flow through the compressor.  It is assumed that the direction of flow is in the direction of pressure boost, i.e., $i \rightarrow j$, and hence $f_{ij} > 0$ for a compressor $(i, j)$. 

Finally, the set of junctions $N$ in the network is partitioned into two mutually disjoint sets consisting of the slack junctions $N_s$ and non-slack junctions $N_{ns}$. The slack junctions have pressures specified, and non-slack junctions have gas injection into the network specified. Before we formulate the $\GF$ problem for the gas network $G$, we enumerate the assumptions we make about the network.

\noindent \textit{Assumptions about the network and the provided data:} 
%The following are assumptions about the network and the provided data: 
\begin{enumerate}[label=(A\arabic*)]
    %\item The gas injections provided at the non-slack junctions in $N_{ns}$ and the slack junction pressures in $N_s$ ensure the existence of a feasible solution to the GF problem formulated on the entire network, i.e., a solution to the GF problem exists with the slack junction pressures large enough to sustain the flows with positive pressure at all the other junctions and non-negative flow in compressors. \label{list:feasibility}
    %\item The injections are balanced, i.e., $\underset{i\in N_s}{\sum} q_i + \underset{j\in N_{ns}}{\sum} q_j = 0$ \label{list:balanced_injection}
    \item There is at least one slack node, i.e., $|N_s| \geq 1$.\label{assumption:slacks}
    \item For all the compressors in $C$, the pressure boost or compression ratio is known a-priori. 
    % \item Compressors cannot be in parallel, i.e., for any two junctions $i$ and $j$ there exists at most one compressor that connects $i$ and $j$. 
    \item If there are multiple slack nodes ($|N_s| > 1$), a path connecting two slack nodes must consist of at least one pipe. \label{assumption:path-pipe}
    \item Any cycle must consist of at least one pipe. \label{assumption:cycle-pipe}
\end{enumerate}
With these assumptions, we now proceed to record the system of equations that need to be solved. 
The rationale behind \ref{assumption:path-pipe} and \ref{assumption:cycle-pipe}  will be made clear later, but it is linked to the fact that except for pipes, other edge elements have their flows and vertex pressures decoupled.

\subsection{Governing equations for the network} \label{subsec:network-gf}

The steady mass flow rate of gas $f_{ij}$ in each pipe $(i, j) \in P$ and the pressures at the ends $i$ and $j$ have to satisfy Eq. \eqref{eq:gen-non-ideal-physics-flows-nd}. 
This equation is rewritten using the notation presented in the previous paragraphs as follows:
\begin{flalign}
    \Pi\left(p_i\right) - \Pi\left(p_j\right) = \beta_{ij} f_{ij} |f_{ij}| 
    \label{eq:pipe}
\end{flalign}
where $\beta_{ij} \triangleq (\mathcal M^2/\mathcal C) \cdot \lambda_{ij} L_{ij} / (2 D_{ij} A_{ij}^2)$ is the effective resistance of the pipe. Here, the convention is that when gas flows from junction $i \rightarrow j$ (resp. $j \rightarrow i$), then $f_{ij}$ is positive (resp. negative). 

Each compressor $(i, j) \in C$ is associated with the pressure boost equation 
\begin{flalign}
p_j = \alpha_{ij} p_i,
\qquad \alpha_{ij}\geqslant 1. \label{eq:compressor}
\end{flalign}
As mentioned in the previous section, the mass flow through the compressor $f_{ij}$ is along the direction of compression, hence $f_{ij}$ is strictly non-negative.  In this model, the mass flow through the compressor is governed by the nodal balance equations alone.

Finally, at each non-slack junction $i \in N_{ns}$ in the network, i.e., the junctions where the net injection into the system is specified, we have the flow balance equation. 
To formulate the flow-balance equations, we let $\bm A$ be a reduced edge incidence matrix of the graph $G$ 
of size $|N_{ns}| \times (|P|+|C|)$. The full edge incidence matrix 
$\bm A^{\mathrm{full}}$ of the graph $G$ is of size $|N| \times (|P|+|C|)$.

Each element of $\bm A$ and $\bm A^{\mathrm{full}}$  is defined as follows: 
\begin{flalign}
A_{ij} = \begin{cases}
-1 & \text{ if $e_j = (v_i, \cdot) \in P\cup C$, } \\
+1 & \text{ if $e_j = (\cdot, v_i) \in P \cup C$, } \\
0 & \text{ otherwise. }
\end{cases} \label{eq:incidence}
\end{flalign}

Let $\bm q \in \mathbb R^{|N_{ns}|}$ denote the vector of specified injections at the non-slack junctions,  $\bm q^{\mathrm{full}} \in \mathbb R^{|N|}$  be the full vector of injections that includes the unknown injections at the slack nodes while $\bm f \in \mathbb R^{(|P|+|C|)}$  be the vector of mass flows in the pipes and compressors. Given this notation, the nodal balance equations for the non-slack nodes are
\begin{flalign}\label{eq:nodal-balance}
    \bm A  \bm f = \bm q,
\end{flalign}
and the nodal balance for \emph{all} nodes reads
\begin{flalign}\label{eq:nodal-balance-all}
    \bm A^{\mathrm{full}}  \bm f = \bm q^{\mathrm{full}}.
\end{flalign}

In summary, the steady-state gas flow problem on the pipeline network takes the form

\begin{subnumcases} {\label{eq:GF}
\mathcal G \mathcal F:}  
\Pi(p_i) - \Pi(p_j) = \beta_{ij} f_{ij} | f_{ij}| \quad \forall (i, j) \in P , &  \label{eq:GFpipe}\\ 
p_j = \alpha_{ij} p_i \quad \forall (i, j) \in C, & \label{eq:GFcompressor} \\ 
\bm A \bm f = \bm q, & \label{eq:GFbalance}\\
p_i \text{ specified } \forall i \in N_s. & \label{eq:GFslack}
\end{subnumcases}

Note that any $ \bm f \in \mathbb R^{(|P|+|C|)}, \bm p \in \mathbb{R}^{|N_{ns}|}$  that satisfies Eq. \eqref{eq:GF} is mathematically a solution, even if it violates certain physical assumptions.  Moreover, any such solution to the steady-state flow problem satisfies the injection/extraction balance conditions
\begin{align}\label{eq:injection-extraction balance}
    \underset{i\in N_s}{\sum} q_i + \underset{j\in N_{ns}}{\sum} q_j = 0.
\end{align}
The condition \eqref{eq:injection-extraction balance} is a consequence of the flow balance conditions \eqref{eq:nodal-balance-all}
applied to all nodes in the $\GF$ problem.
Indeed, summing the conditions \eqref{eq:nodal-balance-all}
over all nodes yields  \eqref{eq:injection-extraction balance} because mass flow through each pipe appears exactly twice -- once with a plus sign and once with a minus sign, adding up to zero.
\begin{remark}
 In general, a given set of potentials will determine a unique set of mass flows, but these mass flows may or may not satisfy flow balance. Similarly, a given set of mass flows that satisfy flow balance along with a given slack pressure will determine a set of potentials only if the  slack pressure and mass flows are compatible. However, if a solution $(\bm{p}, \bm{f})$ exists for the system, then it is certainly true that  $\bm{p}$ can be determined from $\bm{f}$ and \emph{vice-versa} through \eqref{eq:GFpipe}.
%  \textcolor{red}{\emph{This remark is wrong. It holds for tree networks only, not general networks. Rephrase before next submission}. In general, a given set of potentials will determine a unique set of mass flows, but these mass flows may or may not satisfy flow balance. Similarly, a given set of mass flows that satisfy flow balance along with a given slack pressure will determine a set of potentials only if the  slack pressure and mass flows are compatible. However, if a solution $(\bm{p}, \bm{f})$ exists for the system, then it is certainly true that  $\bm{p}$ can be determined from $\bm{f}$ and \emph{vice-versa} through \eqref{eq:GFpipe}.}
\end{remark}
\begin{remark}
The previous remark suggests that uniqueness follows if Eq.~\eqref{eq:GFbalance} could be inverted, but this happens only for a tree network with $|N_s| = 1$, where the injectivity of $\bm A^{\mathrm{full}}$ for a tree in conjunction with the fact that $|N_s| = 1$ implies that $\bm A$ is invertible.
\end{remark}
The given data for a $\GF$ problem is a tuple $(\bm q, \bm p)$ where 
$\bm q \in \mathbb R^{|N_{ns}|}$ denotes the vector of specified injections and 
$\bm p \in \mathbb{R}^{|N_s|}$ the vector of slack pressures. 
The condition \eqref{eq:injection-extraction balance} constrains $\bm q^{\mathrm{full}}$, the full vector of injections at all the junctions. 
If $|N_s| =1$, the input conditions $\bm q$ together with the flow balance condition \eqref{eq:injection-extraction balance} determines all of $\bm q^{\mathrm{full}}$. 
However, if $|N_s| > 1$, it is not a priori clear if $(\bm q, \bm p)$ determines $\bm q^{\mathrm{full}}$ uniquely. This question will be addressed in Section~\ref{subsec:uniqueness}.
\subsection{Uniqueness of solution to the \texorpdfstring{$\GF$}{GF} problem on a network} \label{subsec:uniqueness}
The system of equations for the $\GF$ problem on the network has $|P| + |C|$ flow variables, $|N_{ns}|$ pressure variables, and the $|P| + |C| + |N_{ns}|$ total equations because the compressor ratios, slack junction pressures, and the non-slack junction injection values are known a-priori. For convenience, $|N_s|$ trivial slack pressure equations may be added to the system to have $|P| + |C| + |N|$ equations and variables.

Let $\domP\subseteq\mR$ be the domain of allowed pressure values $p$ for the $\GF$ problem.
If the potential $\Pi$ be strictly  increasing on this domain, i.e., for any $p$, $\hat p\in \domP$, $\Pi(\hat p)>\Pi(p)$ whenever $\hat p >p$, the monotonicity property for the potential $\Pi$ on this domain will allow us to prove uniqueness of a solution if it exists. Depending on our objective, we will be interested in various choices of the domain $\domP$.

A solution that does not violate physical laws must ensure positive density and positive values for the isothermal compressibility.
Thus, the domain for the pressure would then  be
\begin{flalign}\label{eq:domain-Pi}
    \domP^\text{phys} = 
    \{p \in \mathbb{R} ~|~ \rho=\Pi'(p) > 0,\ \Pi''(p) > 0\}. &
\end{flalign}
\begin{definition}\label{def:feasible}
  A \emph{feasible} solution to the $\GF$ problem  is  
  a solution of Eq. \eqref{eq:GF} that
  restricts pressures to be in $\domP^\text{phys}$ defined in 
  Eq.~\eqref{eq:domain-Pi} with  non-negative mass flow through compressors.
\end{definition}

For an ideal gas, as well as the CNGA EoS, Definition~\ref{def:feasible} rules that  $\domP^\text{phys} = \{p > 0\}$. 
Moreover, as a consequence of restricting the allowable values of the pressure to this domain, $\rho > 0$ ensures that $\Pi$ is strictly increasing on $ \domP^\text{phys}$ so that $\Pi(p) > \Pi(0)\ \forall \  p \in \domP^\text{phys}$. 
This is why both \cite{Singh2020, misra2020monotonicity}, in proving uniqueness of solution assume positive pressures and densities, respectively. 
The proof of uniqueness for an ideal gas in \cite{Singh2020} uses graph theoretic arguments in conjunction with the monotonicity of $\Pi$ for positive pressures. However, in the proof provided in \cite{misra2020monotonicity}, monotonicity features more prominently. 

In practice, when using a non-linear iterative scheme to solve the $\GF$ problem, 
the algorithm may yield an unphysical/infeasible solution, e.g. with negative pressures or $\Pi(\cdot) < \Pi(0)$, 
and one cannot conclude from the existing uniqueness results whether or not the problem has an alternative 
\emph{feasible} solution as in Definition~\ref{def:feasible}.
This motivates the study of a uniqueness result that pertains to mathematical (possibly non-physical) solutions of the $\GF$ problem by relaxing the conditions imposed on $\domP^\text{phys}$.

Before presenting our uniqueness results, we first introduce two definitions that separate the mathematical analysis of the system in Eq. \eqref{eq:GF} from its physical interpretation.
\begin{definition}
\label{def:compatible}
   Potentials $\Pi_i, \Pi_j$ are compatible with a compressor $(i,j)$ if there exist pressures  $p_i$ and $p_j$ that satisfy Eq. \eqref{eq:GFcompressor} with $\Pi(p_i) = \Pi_i, \Pi(p_j) = \Pi_j$. 
   A potential $\Pi_i$ is compatible with a slack node $i\in N_s$ if $\Pi_i = \Pi(p_i)$ for the specified slack pressure
   $p_i$ in Eq. \eqref{eq:GFslack}. 
 \end{definition}
 
\begin{definition}
\label{def:generalized}
   A \emph{generalized} solution to the $\GF$ problem Eq. \eqref{eq:GF} is a 
   set of values for the potential and mass flows that satisfies Eq. \eqref{eq:GFpipe} and Eq. \eqref{eq:GFbalance} 
   and is compatible with the compressors and slack nodes as per Definition~\ref{def:compatible}.
   Generalized solutions (unlike feasible solutions)
   have no constraints on the sign of pressure or the direction of mass flow in compressors.
 \end{definition}

If we drop the requirement that $\Pi''(\cdot) > 0$, then we may define a modified domain $\domP^\text{gen}$ for our purposes as follows:

 \begin{figure}[htb]
     \centering
     \includegraphics[scale=1.0]{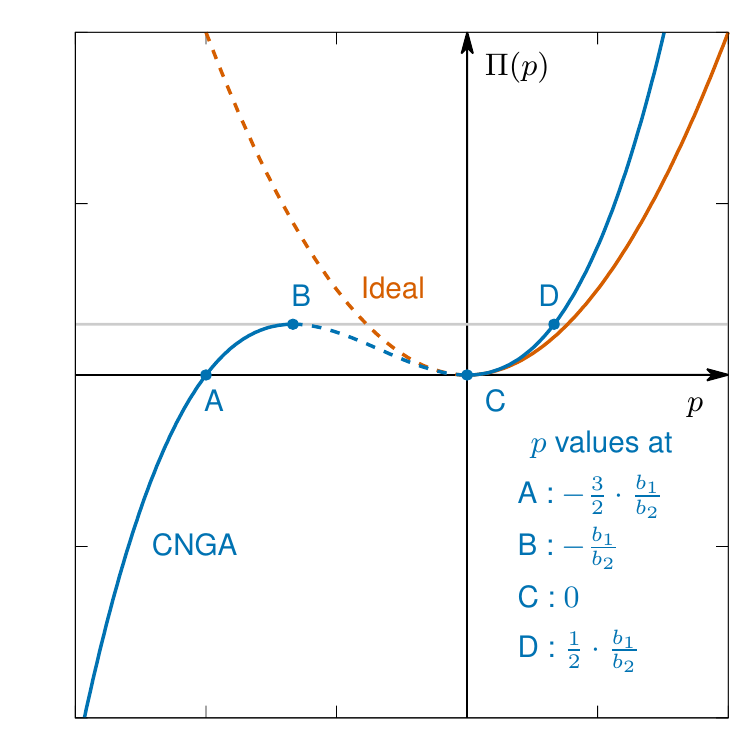}
     \caption{Potential function (not to scale) for the ideal and CNGA EoS, assuming $b_1, b_2$ are positive. The solid lines indicate the parts of the function that are increasing (have positive density) and serve as candidates for  the domain of definition according to Assumption~\ref{assumption:domain-Pi-relaxed}. For the ideal gas,  the domain can only be $\{p > 0\}$ because the function is decreasing otherwise. For the CNGA EoS, $\domP^\text{gen}$ can be taken as the abscissa of the curve excluding either the portion  A to C or the portion  B to D. 
     However, the feasible domain $\domP^\text{phys}$ as indicated by Eq. \eqref{eq:domain-Pi} would be the same for both the curves, i.e., the set $\{p > 0 \}$.}
     \label{fig:domain}
 \end{figure}

%  \begin{hypothesis}
%  \label{hyp:domain-Pi-relaxed}
%  Asssume $\mathrm{Dom}^b(\Pi) = \cup_k I_k$, where each $I_k$ is an interval such that
%  (i) $I_m \cap I_n = \emptyset$ 
%  (ii) $\Pi(I_m) \cap \Pi(I_n) = \emptyset$ 
%  (iii) $\Pi'(p) > 0 \; \forall p \in I_k$ 
%  \end{hypothesis}
 
  \begin{assumption}
 \label{assumption:domain-Pi-relaxed}
There exists a non-empty set $\domP^\text{gen} \subseteq \mR$ such that if  $p,  \hat p\in \domP^\text{gen}$, then  $\Pi(p) > \Pi(\hat p)$ whenever $p > \hat p$. Such a set is maximal in the sense that inclusion of any $p \notin \domP^\text{gen}$ would invalidate the monotonicity property defined above.
 \end{assumption}
 
%  Assumption on the domain $\domP^\text{gen}$ of definition of $\Pi(p)$
%  for a generalized solution (in order to get a uniqueness result):
%  \begin{flalign}\label{eq:domain-Pi-generalized}
%     \domP^\text{gen}
%     \subseteq \mR,\ \text{s.t. }
%     p,p'\in \domP^\text{gen},\ p'>p \implies \Pi(p')>\Pi(p).
% \end{flalign}
 
 %
Note that the set $\domP^\text{gen}$ need not be unique but it can be constructed to satisfy $\domP^\text{gen} \supset \domP^\text{phys}$. 
 For the ideal and CNGA EoS, such a domain that includes $\domP^\text{phys}$  is indicated in Fig. \ref{fig:domain}.  
 For the ideal gas, the set $\domP^\text{gen} = \{p > 0\}$ is the only possible choice  while for the CNGA EoS Eq. \eqref{eq:rho-nd} one could choose either  
 $\domP^\text{gen} = \left\{p \leq -\frac{3}{2}\cdot\frac{b_1}{b_2} \right\} \cup \{p > 0\}$ or instead 
 $\domP^\text{gen} = \left\{p \leq -\frac{b_1}{b_2} \right\} \cup \{p > \frac{1}{2}\cdot\frac{b_1}{b_2}\}$. However, the latter does not contain $\domP^\text{phys}$. In either case, a choice is made so that $\domP^\text{gen} $ is then fixed.
 %

%  \begin{lemma}
%  \label{lemma:Pi-bijective}
% Under Hypothesis~\ref{hyp:domain-Pi-relaxed}, $\Pi$ is increasing and \textcolor{red}{bijective / should be injective} on $\mathrm{Dom}^b(\Pi)$.
%  \end{lemma} \textcolor{cyan}{I did mean to say that the injectivity implies it is bijective on the domain, but we can skip that and say injective alone.}
%  %
%  \begin{proof}
%  For any two real $\Pi_1 < \Pi_2$, there exist (not necessarily distinct) $m, n$ such that $\Pi_1 \in \Pi(I_m), \Pi_2  \in \Pi(I_n)$ and hence  unique $p_1 \in I_m, \; p_2 \in I_n$, $p_1 < p_2$ with $\Pi(p_1) = \Pi_1, \Pi(p_2) = \Pi_2$. 
%  Thus,  positive density and bijectivity are both preserved.
%  \end{proof}
%
 \begin{lemma}
\label{lemma:Pi-compressor-monotonic}
    A compressor preserves monotonicity of potentials under Assumption~\ref{assumption:domain-Pi-relaxed}, 
    i.e. if $\Pi_i < \hat \Pi_i$ are potentials at the inlet of a compressor $(i, j)$, 
    then the potentials $\Pi_j$ and $\hat \Pi_j$ at the discharge  satisfy $\Pi_j < \hat \Pi_j$.
 \end{lemma}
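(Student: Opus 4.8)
The plan is to reduce this statement about potentials to an elementary statement about pressures, exploiting that by Assumption~\ref{assumption:domain-Pi-relaxed} the map $\Pi$ is strictly increasing, hence injective, on $\domP^\text{gen}$. First I would invoke compatibility (Definition~\ref{def:compatible}) to produce inlet pressures $p_i, \hat p_i \in \domP^\text{gen}$ with $\Pi(p_i) = \Pi_i$ and $\Pi(\hat p_i) = \hat\Pi_i$, together with the corresponding discharge pressures $p_j = \alpha_{ij} p_i$ and $\hat p_j = \alpha_{ij}\hat p_i$ satisfying Eq.~\eqref{eq:GFcompressor} and $\Pi(p_j) = \Pi_j$, $\Pi(\hat p_j) = \hat\Pi_j$. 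Because these discharge potentials are attained by genuine pressures, those pressures lie in $\domP^\text{gen}$ as well.

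Next I would invert monotonicity at the inlet: from $\Pi_i < \hat\Pi_i$ and the strict monotonicity of $\Pi$ on $\domP^\text{gen}$, conclude $p_i < \hat p_i$, since $p_i \geq \hat p_i$ would force $\Pi_i \geq \hat\Pi_i$, a contradiction. The central step is the scaling: because $\alpha_{ij} \geq 1 > 0$, multiplication by $\alpha_{ij}$ is order-preserving, so $p_i < \hat p_i$ gives $p_j = \alpha_{ij} p_i < \alpha_{ij}\hat p_i = \hat p_j$. Applying monotonicity once more, now in the forward direction and to the discharge pressures $p_j, \hat p_j \in \domP^\text{gen}$, yields $\Pi_j = \Pi(p_j) < \Pi(\hat p_j) = \hat\Pi_j$, which is the claim.

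The main obstacle is ensuring that the discharge pressures remain inside the single monotone branch $\domP^\text{gen}$, since for a relaxed domain such as the CNGA choice $\domP^\text{gen} = \{p \leq -\tfrac{3}{2}\tfrac{b_1}{b_2}\} \cup \{p > 0\}$ the set is disconnected, and an order-preserving map on $\mR$ need not preserve a union of intervals. I would resolve this in two complementary ways. In the generalized-solution setting the difficulty is automatic: compatibility supplies a pressure realizing each discharge potential, and $\Pi$ is taken to be defined only on $\domP^\text{gen}$, so $p_j, \hat p_j$ already belong to it. For the explicit EoS one checks directly that scaling by $\alpha_{ij} \geq 1$ maps each branch into itself: positive pressures remain positive, while a pressure $p \leq -\tfrac{3}{2}\tfrac{b_1}{b_2} < 0$ becomes at least as negative, since $(\alpha_{ij}-1)p \leq 0$ gives $\alpha_{ij} p \leq p \leq -\tfrac{3}{2}\tfrac{b_1}{b_2}$, so no branch-crossing can occur and the monotonicity argument applies without caveat.
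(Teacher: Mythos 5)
Your proof is correct and follows essentially the same route as the paper's: realize the inlet potentials by pressures $p_i < \hat p_i$ using strict monotonicity of $\Pi$ on $\domP^\text{gen}$, use $\alpha_{ij} > 0$ to conclude $p_j = \alpha_{ij} p_i < \alpha_{ij} \hat p_i = \hat p_j$, and apply monotonicity again at the discharge. Your extra verification that scaling by $\alpha_{ij} \geq 1$ maps each monotone branch of $\domP^\text{gen}$ into itself (so no branch-crossing can spoil the final step) is a refinement of a point the paper's three-line proof passes over silently, not a different approach.
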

\begin{proof}
 There exist $p_i < \hat p_i$ with $\Pi(p_i) = \Pi_i, \Pi(\hat p_i) = \hat \Pi_i$.
 For a compressor boost $\alpha > 0$, $p_j = \alpha p_i < \alpha \hat p_i = \hat p_j$. Since $\Pi$ is increasing, $\Pi(p_j) < \Pi(\hat p_j)$.
\end{proof}
For the case of an arbitrary non-ideal gas considered here, we will prove that if a \emph{generalized} solution (Definition~\ref{def:generalized})  to the $\GF$ problem  Eq. \eqref{eq:GF} satisfies Assumption~\ref{assumption:domain-Pi-relaxed}, then it is unique. 
The implication of uniqueness of such a \emph{generalized} solution is that either it is the unique \emph{feasible} solution (Definition~\ref{def:feasible}) or the problem is infeasible.

Our proof closely follows that of \cite{Singh2019, Singh2020} who proved uniqueness of the solutions with positive pressure, i.e.,  \emph{feasible} solutions. Hence 
we shall state the essential  lemmas and theorems  in \cite{Singh2019, Singh2020} used to prove the uniqueness result and explain how they  hold for a \emph{generalized} solution under Assumption~\ref{assumption:domain-Pi-relaxed}.
One obvious feature of these proofs is that the  validity of the arguments does not hinge on non-negativity of the compressor flows. This observation helps to expand the proof of uniqueness to include a \emph{generalized} solution.

\begin{proposition} \label{prop:s-t-path}
(\cite{Korte2018}, Lemma~5 in \cite{Singh2020})
Given a graph with (balanced) injections and withdrawals, 
there exist nodes $m, n$ such that there is injection at $m$, withdrawal at $n$, and a non-intersecting path (i.e., a path where no vertex is repeated) from $m$ to $n$ 
with flow directions along the path.
\end{proposition}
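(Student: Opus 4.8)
The plan is to recast the statement in terms of the directed graph $D$ obtained by orienting each edge of $G$ in the direction of its actual mass flow, so that every edge of $D$ carries strictly positive flow. Writing $\mathrm{out}(i)$ and $\mathrm{in}(i)$ for the total flow leaving and entering node $i$ in $D$, the balance equation \eqref{eq:nodal-balance-all} says that the net outflow $\mathrm{out}(i) - \mathrm{in}(i)$ at each node equals its net injection. Hence an injection node is a \emph{source} of $D$ (strictly positive net outflow, so in particular at least one outgoing edge) and a withdrawal node is a \emph{sink} (strictly positive net inflow). The proposition then reduces to the purely combinatorial claim that $D$ contains a simple directed path from some source to some sink. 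Since the injections are balanced and not identically zero, the presence of a source forces, via $\sum_i q_i = 0$, the presence of a sink, so both exist.

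First I would fix an arbitrary source $m$ and let $R \subseteq N$ be the set of all nodes reachable from $m$ by directed paths in $D$. By the very definition of reachability, no edge of $D$ leaves $R$: if $i \in R$ and an edge $i \to j$ carried positive flow, then $j$ would also be reachable and hence lie in $R$. Next I would sum the nodal balance over all nodes of $R$. Each edge internal to $R$ contributes once to an $\mathrm{out}$ term and once to an $\mathrm{in}$ term and therefore cancels, while only boundary edges survive; since no edge leaves $R$, the surviving terms are precisely the edges entering $R$ from its complement, giving $\sum_{i \in R}(\mathrm{out}(i) - \mathrm{in}(i)) = -(\text{flow into } R) \le 0$.

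Suppose, toward a contradiction, that $R$ contained no sink. Then every node of $R$ is either a source or a balanced (throughflow) node, so $\mathrm{out}(i) - \mathrm{in}(i) \ge 0$ for each $i \in R$; combined with the strict inequality at $m$, this yields $\sum_{i \in R}(\mathrm{out}(i) - \mathrm{in}(i)) > 0$, contradicting the previous bound. Therefore $R$ must contain a sink $n$, and by the definition of $R$ there is a directed walk from $m$ to $n$ following flow directions. Deleting repeated vertices (equivalently, taking a shortest such walk) produces the required non-intersecting directed path, completing the argument.

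I expect the main subtlety to be the boundary-cancellation bookkeeping in the conservation sum, and in particular pinning down the sign convention of \eqref{eq:incidence} so that an injection node indeed corresponds to net outflow in $D$; once the orientation is fixed consistently, the ``no flow crosses the boundary of a reachability-closed set'' observation does the real work. The extraction of a \emph{simple} path from a directed walk is routine cycle removal and is not the crux. An alternative route would invoke the flow-decomposition theorem to write the flow as a sum of simple source-to-sink paths plus cycles, so that any nonzero net imbalance guarantees at least one source-to-sink path; I would prefer the self-contained reachability argument above, since it avoids quoting decomposition machinery and makes transparent why balance of injections is exactly what is needed.
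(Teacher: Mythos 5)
Your proof is correct, but there is nothing in the paper to compare it against line by line: the paper states Proposition~\ref{prop:s-t-path} without proof, importing it from \cite{Korte2018} and Lemma~5 of \cite{Singh2020}. The route packaged in such references is essentially flow decomposition---a flow satisfying conservation splits into source-to-sink paths plus circulations, so a nonzero balanced injection vector forces at least one flow-directed source-to-sink path---which is exactly the alternative you mention and set aside. Your reachability-plus-cut argument (orient edges along positive flow, take the set $R$ of nodes reachable from a source $m$, observe that no flow-carrying edge exits $R$, sum conservation over $R$ to get $\sum_{i\in R} q_i \le 0$, and conclude $R$ must contain a withdrawal node) is the more elementary and self-contained option: it uses only conservation and the definition of reachability, and it makes transparent that global balance $\sum_i q_i = 0$ enters only to guarantee a sink exists somewhere, while the conservation sum over $R$ does the rest. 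The price is that you re-derive what the cited decomposition theorem already packages; the benefit is that the proposition becomes independent of external machinery, which matters here since the paper later leans on it (via Proposition~\ref{prop:flow_increase} and Lemma~\ref{lemma:injections_uniqueness}) in a generalized setting where pressures and compressor flows may be unphysical. Two details you flag or gloss deserve care in a full write-up: zero-flow edges must be deleted, not arbitrarily oriented, when forming $D$, since the conclusion requires strictly positive flow in the path direction on every edge used; and the incidence convention \eqref{eq:incidence} as literally printed makes $(\bm A \bm f)_i$ the net \emph{inflow} at node $i$, which conflicts with the paper's convention that $q_i<0$ denotes withdrawal---under the literal reading the roles of source and sink simply swap, and your argument goes through unchanged, but the sentence ``an injection node is a source of $D$'' does require fixing the physically intended sign, as you anticipated.
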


\begin{proposition} 
\label{prop:flow_increase}
(Lemma~2 in \cite{Singh2020})
Consider a path between slack nodes $m, n$ along edges $\{1, 2, \dotsc k\}$. 
If flows $f_{1}, f_{2},\dotsc$ as well as $\hat f_{1}, \hat f_{2},\dotsc$ satisfy Eq. \eqref{eq:GFpipe}, \eqref{eq:GFbalance} following Assumption~\ref{assumption:domain-Pi-relaxed}, 
then it is not possible for all $i = 1, 2, \dots k$ to have  
$f_{i} > \hat f_{i}$.
Similarly it is not possible to have $f_{i} < \hat f_{i}$ for all $i = 1, 2, \dots k$.
\end{proposition}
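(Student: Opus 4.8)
The plan is to track the difference between the two potential fields as one walks along the path from $m$ to $n$, and to show that the hypothesis $f_i > \hat f_i$ forces this difference strictly downhill across pipes while leaving its sign unchanged across compressors, which cannot be reconciled with the boundary values pinned at the two slack nodes. First I would label the path nodes $v_0 = m, v_1, \dots, v_k = n$, with edge $i$ joining $v_{i-1}$ and $v_i$ and $f_i$ the flow oriented along the path (re-orienting a pipe flips both $f_i$ and the sign of its potential equation, so this convention is harmless). Writing $\Pi_i, \hat\Pi_i$ for the potentials at $v_i$ in the two solutions, define $D_i = \Pi_i - \hat\Pi_i$. Since $m$ and $n$ are slack nodes whose pressures are the same specified data in both solutions, Eq.~\eqref{eq:GFslack} pins the endpoints: $D_0 = 0$ and $D_k = 0$.

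Next I would establish the per-edge behaviour of $D_i$. For a pipe, the relation \eqref{eq:GFpipe} reads $\Pi_{i-1} - \Pi_i = \beta_i f_i |f_i|$, and subtracting the same identity for the hatted solution gives $D_{i-1} - D_i = \beta_i \left( f_i |f_i| - \hat f_i |\hat f_i| \right)$. As $\beta_i > 0$ and $x \mapsto x|x|$ is strictly increasing, the hypothesis $f_i > \hat f_i$ yields $D_i < D_{i-1}$, so $D$ strictly decreases across every pipe. For a compressor I would invoke Lemma~\ref{lemma:Pi-compressor-monotonic}: because the boost ratio is fixed data and $\Pi$ is strictly increasing (hence injective) on $\domP^\text{gen}$ by Assumption~\ref{assumption:domain-Pi-relaxed}, the relation $p_{v_i} = \alpha p_{v_{i-1}}$, or its inverse for the opposite traversal orientation, preserves the ordering of pressures and therefore of potentials. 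Consequently the sign of $D_i$ matches that of $D_{i-1}$; in particular a compressor sends $D_{i-1} = 0$ to $D_i = 0$ and $D_{i-1} < 0$ to $D_i < 0$.

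Combining the two rules gives a descent argument. Starting from $D_0 = 0$, any leading compressors keep $D_i = 0$; by Assumption~\ref{assumption:path-pipe} the path between the two slack nodes contains at least one pipe, and at the first such pipe $D$ drops strictly below zero. From that index onward, pipes push $D$ further negative and compressors keep it negative, so $D_i < 0$ for every subsequent node, and in particular $D_k < 0$. This contradicts $D_k = 0$, ruling out $f_i > \hat f_i$ for all $i$. The reverse case $f_i < \hat f_i$ for all $i$ is entirely symmetric: $D$ strictly increases across pipes, compressors preserve its sign, and one reaches $D_k > 0$, again contradicting $D_k = 0$.

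The main obstacle I anticipate is the proper handling of compressors, since a compressor acts multiplicatively on pressures rather than additively on potentials and so resists any naive telescoping of the $D_i$; the argument must instead rest on the qualitative sign-preservation of Lemma~\ref{lemma:Pi-compressor-monotonic} and be phrased to cover both traversal orientations. The secondary delicate point is guaranteeing that once $D$ turns strictly negative it can never return to zero, which is exactly what the combination of strict pipe-descent with compressor sign-preservation supplies, and it is precisely here that Assumption~\ref{assumption:path-pipe}, ensuring at least one pipe on the path, becomes indispensable.
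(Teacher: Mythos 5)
Your proof is correct and follows essentially the same approach as the paper: the paper's own ``proof'' simply defers to Lemma~2 of \cite{Singh2020} and observes that Lemma~\ref{lemma:Pi-compressor-monotonic} supplies the only ingredient needing generalization (compressors preserving the ordering of potentials under Assumption~\ref{assumption:domain-Pi-relaxed}), which is exactly the strict pipe-descent plus compressor sign-preservation argument you telescope along the path between the two pinned slack potentials. Your write-up is just a self-contained rendering of that cited argument, with the traversal-orientation and equality-preservation details correctly handled.
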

\begin{remark}
Note that without \ref{assumption:path-pipe}, the proposition \ref{prop:flow_increase} is false.
\end{remark}

The proof in \cite{Singh2020}  uses the monotonicity of the pressure drop as a consequence of flow in pipes and compressors. Since Lemma~\ref{lemma:Pi-compressor-monotonic} proves that compressors preserve the monotonicity of potentials, the veracity of the proposition is verified. Armed with these results, we may now state the following lemma.
\begin{lemma}\label{lemma:injections_uniqueness}
If a \emph{generalized} solution to the $\GF$ problem Eq. \eqref{eq:GF}  exists under Assumption~\ref{assumption:domain-Pi-relaxed},
then the given tuple $(\bm q, \bm p)$, 
where $\bm q \in \mR^{|N_{ns}|}$, $\bm p \in \mathbb{R}^{|N_s|}$, 
determines $\bm q^{\mathrm{full}} \in \mathbb R^{|N|}$ subject to Eq. \eqref{eq:injection-extraction balance} uniquely.
In other words, the components of $\bm q^{\mathrm{full}}$ corresponding to the slack nodes are well-defined.
\end{lemma}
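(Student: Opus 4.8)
The plan is to establish the stated single-valuedness by comparing two arbitrary generalized solutions. Concretely, I would take any two generalized solutions $(\bm p,\bm f)$ and $(\hat{\bm p},\hat{\bm f})$ of Eq.~\eqref{eq:GF} sharing the given data $(\bm q,\bm p)$ (both under Assumption~\ref{assumption:domain-Pi-relaxed}) and show that the resulting full injection vectors agree on the slack components. The non-slack entries of $\bm q^{\mathrm{full}}$ are just the specified data $\bm q$ and so coincide automatically, which reduces everything to the slack entries $q_i$, $i\in N_s$. When $|N_s|=1$ the unique slack injection is forced by the global balance \eqref{eq:injection-extraction balance}, being minus the sum of the specified non-slack injections, so the lemma is immediate; all the content lies in the case $|N_s|>1$.

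For the main case I would introduce the \emph{difference flow} $\bm g \triangleq \bm f-\hat{\bm f}$ together with the slack-injection gaps $\delta_i \triangleq q_i-\hat q_i$ for $i\in N_s$. Since both solutions satisfy $\bm A\bm f=\bm q=\bm A\hat{\bm f}$, the difference flow is divergence-free at every non-slack node, i.e.\ $\bm A\bm g=\bm 0$; and the full balance \eqref{eq:nodal-balance-all} shows that the only net sources and sinks of $\bm g$ sit at the slack nodes, the net nodal value there being exactly $\delta_i$. Invoking \eqref{eq:injection-extraction balance} in each solution gives $\sum_{i\in N_s}\delta_i=0$, so $\bm g$ is a balanced flow whose entire supply and demand is carried by the slack set. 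The target is then reduced to showing $\delta\equiv 0$.

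The crux is to convert a hypothetical nonzero $\delta$ into a monotonicity-along-a-path contradiction. Suppose $\delta\not\equiv 0$; since the $\delta_i$ are balanced, there is a slack node $m$ with $\delta_m>0$ and a slack node $n$ with $\delta_n<0$. Regarding $\bm g$ as a flow with the balanced nodal injections $\delta$ supported on $N_s$, Proposition~\ref{prop:s-t-path} yields a non-intersecting path from such an $m$ to such an $n$ along which $\bm g$ is strictly positive in the direction of traversal; equivalently, orienting each edge in the traversal direction, $f_e>\hat f_e$ on every edge $e$ of the path. By Assumption~\ref{assumption:path-pipe} this slack-to-slack path contains at least one pipe, so Proposition~\ref{prop:flow_increase} applies (its compressor edges are covered by the monotonicity of Lemma~\ref{lemma:Pi-compressor-monotonic}) and forbids exactly this situation, namely $f_e>\hat f_e$ simultaneously on all edges of a path between slack nodes. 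This contradiction forces $\delta\equiv 0$, i.e.\ $q_i=\hat q_i$ for every $i\in N_s$, so together with the fixed non-slack data and \eqref{eq:injection-extraction balance} the full vector $\bm q^{\mathrm{full}}$ is determined uniquely.

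I expect the main obstacle to be the bookkeeping in the reduction step: correctly recognizing that the slack-injection gaps are precisely the residual divergence of $\bm g$, and matching the sign and orientation conventions so that ``$\bm g$ positive along the path'' translates into the hypothesis ``$f_e>\hat f_e$ for all $e$'' that Proposition~\ref{prop:flow_increase} rules out. One must also verify that the path produced by Proposition~\ref{prop:s-t-path} has both endpoints in $N_s$ (guaranteed because $\delta$ is supported on $N_s$) and contains a pipe (guaranteed by Assumption~\ref{assumption:path-pipe}), since both facts are needed for Proposition~\ref{prop:flow_increase} to take effect.
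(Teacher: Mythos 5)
Your proposal is correct and follows essentially the same route as the paper's own proof: both compare two hypothetical solutions, form the difference flow whose balanced divergence is supported on the slack nodes, and derive a contradiction by combining Proposition~\ref{prop:s-t-path} (to extract a slack-to-slack path on which every edge carries strictly larger flow) with Proposition~\ref{prop:flow_increase}. Your extra bookkeeping---the trivial $|N_s|=1$ case, the check that the path endpoints lie in $N_s$, and the appeal to Assumption~\ref{assumption:path-pipe} to guarantee a pipe on the path---merely makes explicit details the paper's proof leaves implicit.
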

\begin{proof}
Let $\bm x \in \mathbb{R}^{|N_s|}$ denote the vector of unknown injections at the slack nodes.
We know from Eq. \eqref{eq:injection-extraction balance} that $$\underset{i\in N_s}{\sum} x_i + \underset{j\in N_{ns}}{\sum} q_j = 0.$$
Suppose $\bm x \in \mathbb{R}^{|N_s|}$ is not unique. Then  two distinct slack injections $\bm x$ and $\hat {\bm x} \in \mathbb{R}^{|N_s|}$ with $\bm x \neq \hat{\bm x}$ will result in distinct mass flows $\bm f, \hat{\bm f} \in \mathbb R^{(|P|+|C|)}$ with $\bm f \neq \hat {\bm f}$ that satisfy the nodal balance condition
$$\bm A^{\mathrm{full}}\bm f = \begin{bmatrix} \bm q \\ \bm x \end{bmatrix} \text{ and } \bm A^{\mathrm{full}}\hat {\bm f} = \begin{bmatrix} \bm q \\ \hat{\bm x} \end{bmatrix}. $$
Subtracting we get $\bm A^{\mathrm{full}}(\bm f - \hat{\bm f}) = \begin{bmatrix} 0 & \bm x - \hat{\bm x} \end{bmatrix}^T$. 
Note that $$\underset{i\in N_s}{\sum} (x_i - \hat{x}_i)  = 0.$$
By Proposition~\ref{prop:s-t-path}, there exists a path from a node $m$ with $x_m - \hat{x}_m > 0$ to a node $n$ with $x_n - \hat x_n < 0$ along which  flow satisfies $f_l > \hat f_l$ for every edge $l$. 
But Proposition~\ref{prop:flow_increase} contradicts this since pressures at the start and the end ($m$ and $n$) are fixed. 
Thus, we must have $\bm x = \hat{\bm x}$ and $\bm q^{\mathrm{full}}$ is unique.
\end{proof}

\begin{theorem}\label{theorem:graph_uniqueness}

If a \emph{generalized} solution (Definition~\ref{def:generalized})  to the $\GF$ problem  Eq. \eqref{eq:GF} satisfying Assumption~\ref{assumption:domain-Pi-relaxed} exists, then it is unique.  
\end{theorem}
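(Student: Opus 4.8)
The plan is to assume two distinct generalized solutions exist and derive a contradiction in two stages: first pinning the node potentials, then the flows. By Lemma~\ref{lemma:injections_uniqueness} the two solutions share the same $\bm q^{\mathrm{full}}$, so the injection at \emph{every} node (slack nodes included) is identical; consequently the difference of the two mass-flow vectors is divergence-free, i.e.\ a circulation satisfying $\bm A^{\mathrm{full}}(\bm f-\hat{\bm f})=\bm 0$. At each slack node the specified pressure is common to both solutions, so the potential difference $\delta_i:=\Pi_i-\hat\Pi_i$ vanishes there. The goal is to show $\delta\equiv 0$ on the whole network and then that the flows coincide.

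To isolate the role of the compressors, I would first contract them. By Assumption~\ref{assumption:cycle-pipe} the subgraph formed by the compressor edges contains no cycle and is therefore a forest; contracting each maximal compressor-connected set of junctions into a single super-node produces a reduced network whose edges are exactly the pipes. Summing the nodal balance \eqref{eq:GFbalance} over the members of a compressor component cancels every (internal) compressor flow, so the reduced network carries its own balance with super-node injections equal to the sums of the member injections, and these are common to both solutions. Lemma~\ref{lemma:Pi-compressor-monotonic} guarantees that within each component the two solutions keep the same ordering of potentials, so the sign of $\delta$ is constant on a component and propagates consistently across compressors.

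Next I would establish uniqueness of the potentials on this pipe-only reduced network by a maximum-principle argument of the type underlying Propositions~\ref{prop:s-t-path} and~\ref{prop:flow_increase}. Assuming $\delta\not\equiv 0$, take (after possibly swapping the two solutions) a super-node where $\delta$ is maximal and strictly positive; since slack-containing super-nodes have $\delta=0$, this extremal super-node is non-slack. The strict monotonicity of the pipe law \eqref{eq:GFpipe} (the map $f\mapsto\beta f|f|$ is strictly increasing, so across a pipe the sign of $f-\hat f$ equals the sign of the drop in $\delta$) forces the additional pipe flow to leave the extremal super-node, while the reduced balance forbids any net change in outflow. This can be reconciled only if the maximal value of $\delta$ is shared by an adjacent super-node; iterating along pipes and invoking Proposition~\ref{prop:flow_increase} to rule out a strictly monotone chain reaching a slack node, one contradicts $\delta=0$ at the slacks. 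Hence $\delta\equiv 0$, every pipe flow is then fixed by the strictly monotone relation \eqref{eq:GFpipe}, and the remaining compressor flows are recovered uniquely from the now-fixed residual injections because the compressor edges form a forest.

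The main obstacle is the compressors. A compressor transmits a mass flow that (for a generalized solution) is unconstrained in sign and obeys no resistive pressure--drop law, so the flow-to-potential monotonicity that drives the pipe argument is simply unavailable on compressor edges; a naive attempt to close the argument on a flow-oriented cycle of the circulation fails, since compressors can raise $\delta$ while preserving its sign and thereby balance the strict decreases forced on the pipe edges. The two ingredients that defeat this obstacle are Assumption~\ref{assumption:cycle-pipe}, which makes the compressor subgraph a contractible forest (so compressor flows never sit on an independent cycle and are uniquely recoverable from balance), and Lemma~\ref{lemma:Pi-compressor-monotonic}, which lets extremal potential differences and their signs pass through compressors unharmed. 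Multiple slack nodes cause no further trouble: Lemma~\ref{lemma:injections_uniqueness} has already fixed all injections, so the reduced network is balanced and each component can be traced back to a slack, which is exactly the step that would fail without Assumptions~\ref{assumption:path-pipe} and~\ref{assumption:cycle-pipe}.
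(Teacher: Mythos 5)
Your overall route (fix $\bm q^{\mathrm{full}}$ via Lemma~\ref{lemma:injections_uniqueness}, contract the compressor forest guaranteed by \ref{assumption:cycle-pipe}, then run a maximum principle on the pipe-only reduced network) is genuinely different from the paper's proof, which reduces to the single-slack case and then defers to the nullspace/cyclic-path argument of \cite{Singh2019}. A direct argument of your kind can be made to work, but your maximum-principle step has a real gap: $\delta_i=\Pi_i-\hat\Pi_i$ is \emph{not} a well-defined number on a super-node. Lemma~\ref{lemma:Pi-compressor-monotonic} only makes the \emph{sign} of $\delta$ constant across a compressor, not its value; for an ideal gas a compressor $(i,j)$ gives $\delta_j=\alpha_{ij}^2\,\delta_i$, so $\delta$ can jump inside a super-node. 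Consequently ``a super-node where $\delta$ is maximal'' is ill-posed, and the intended local contradiction fails, because a pipe attached at a low-$\delta$ junction of the extremal super-node can carry difference flow \emph{into} it. Concretely: take a super-node $\{u,v\}$ consisting of one compressor $u\to v$ with $\alpha=2$ (ideal gas), and suppose $\delta_u=1$, $\delta_v=\alpha^2\delta_u=4$; let a pipe join $w$ to $u$ with $\delta_w=2$ and a pipe join $v$ to $z$ with $\delta_z=3$. Across a pipe the sign of $f-\hat f$ equals the sign of the drop in $\delta$, so the difference flow enters the super-node at $u$ (since $\delta_w>\delta_u$) and leaves it at $v$ (since $\delta_v>\delta_z$); super-node balance holds, the global maximum of $\delta$ sits at $v$, and that maximal value is shared by no neighbour. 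Thus neither horn of your dichotomy (``balance is violated, or the maximum is shared by an adjacent super-node'') occurs: the compressor inside the super-node pumps the difference flow uphill in $\delta$, which is exactly the non-resistive behaviour the contraction was supposed to hide. Your appeal to Proposition~\ref{prop:flow_increase} cannot rescue this, since that proposition concerns paths between \emph{slack} nodes and says nothing about this local configuration.

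The repair is to run the argument on the \emph{sign} of $\delta$, which is precisely what Lemma~\ref{lemma:Pi-compressor-monotonic} controls. Let $S^{+}$ be the set of junctions with $\delta_i>0$. Because the compressor relation $p\mapsto\alpha p$ is invertible and sign-preserving under Assumption~\ref{assumption:domain-Pi-relaxed}, no compressor edge crosses the boundary of $S^{+}$; hence every edge leaving $S^{+}$ is a pipe, and across such a pipe $\delta_u>0\geq\delta_v$ forces $f_{uv}>\hat f_{uv}$, i.e.\ strictly positive difference flow \emph{out} of $S^{+}$. Summing the nodal balances over $S^{+}$ (the injections agree at every node, including slack nodes, by Lemma~\ref{lemma:injections_uniqueness}) shows the total boundary outflow of the difference is zero, so $S^{+}$ has no boundary edges; connectivity of the network together with the fact that slack nodes carry $\delta=0$ then gives $S^{+}=\emptyset$, and symmetrically $S^{-}=\emptyset$. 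With $\delta\equiv 0$, pipe flows coincide by strict monotonicity of $x\mapsto x|x|$, and compressor flows coincide by leaf-to-root elimination on the compressor forest --- your final step, which is correct as written. This sign-set argument is essentially the monotonicity argument of \cite{misra2020monotonicity} and plays the role of the cyclic-path analogue of Proposition~\ref{prop:flow_increase} that the paper's own citation-style proof invokes.
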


\begin{proof}
In \cite{Singh2020}, the uniqueness of a \emph{feasible} solution for $|N_s| > 1$ is proved by first  establishing the uniqueness of $\bm q^{\mathrm{full}}$ so that the problem is tantamount to proving uniqueness for the case $|N_s| = 1$, as was done in \cite{Singh2019}. We have established that $\bm q^{\mathrm{full}}$ is unique, 
so as in the proof (Theorem~1) in \cite{Singh2019} that  uses the structure of the nullspace of $\bm A^{\mathrm{full}}$ along with an analogue of Proposition~\ref{prop:flow_increase} stated for a cyclic path, the same logical arguments carry through here to obtain uniqueness of a \emph{generalized} solution that satisfies Assumption~\ref{assumption:domain-Pi-relaxed}.
\end{proof}
\begin{remark}
Note that the proof above relies on the validity of \ref{assumption:cycle-pipe}.
\end{remark}

\begin{remark}
Assumption~\ref{assumption:domain-Pi-relaxed} needs to be checked only for nodes with compressors in Eq.~\eqref{eq:GFcompressor}, because Eq.~\eqref{eq:GFpipe} involves only the potential, and pressure is always known in Eq.~\eqref{eq:GFslack}.
\end{remark}
\begin{remark}
The results so far have been derived  considering compressors as edge elements of the network that provide multiplicative boost $\alpha \geqslant 1$ in Equation~\eqref{eq:GFcompressor}. However, the results hold as long as $\alpha > 0$, so that that this theory carries over to networks that consist of  components such as pressure regulators, short pipes, valves etc, all of which are modelled by Equation~\eqref{eq:GFcompressor} with $\alpha > 0$.
\end{remark}
\begin{remark}
\label{remark:potential_reference}
Note that the results derived thus far are valid for any choice $p_0 \in \mR$ in Eq.~\eqref{eq:pi-def}. However, for convenience, hereon we shall assume  $p_0 = 0$ so that $\Pi(0) = 0$.
\end{remark}
\subsection{Feasibility for flow in a single pipe}
\label{subsec:single_pipe}

%We assumed earlier in Assumption~\ref{list:feasibility} that a solution to the GF problem exists with the slack junction pressures large enough to sustain the flows with positive pressure at all the other junctions.  
While it is not possible to analytically conclude if the problem \eqref{eq:GF} is feasible for a network, 
one can motivate the conditions that lead to infeasibility by considering flow in a single pipe. In other words, we shall examine conditions under which a solution exists for the system \eqref{eq:GF} governing flow in a single pipe.  
Consider a single pipe (see Fig. \ref{fig:single-pipe}) where at one end we prescribe the flow and at the other end we prescribe pressure. 
The mass flow is  determined from a trivial application of the nodal balance equation.
Thus, for concreteness, consider  a pipe  $(1, 2)$ with given pressure $p_1 >0$ (or $p_2 > 0$) and flow $f_{12}>0$ from $1\rightarrow 2$.

If $p_2 > 0$ were known, then of course the corresponding equation for $p_1$, 
$\Pi(p_1) = \Pi(p_2) +  \beta_{12}f_{12}^2 > 0$  will always be feasible.
The question of feasibility for given $p_1 > 0$ is tantamount to asking if the equation $\Pi(p_2) = \Pi(p_1) -  \beta_{12}f_{12}^2$ has a solution $p_2 > 0$.
Clearly, if $\Pi(p_1) -  \beta_{12}f_{12}^2 < 0$, the equation has no solution since $\Pi(\cdot) > 0$ for positive arguments. However, if $\Pi(p_1) -  \beta_{12}f_{12}^2 > 0$, then there is a unique solution $p_2 > 0$ since  $\Pi$ is an increasing function for positive arguments. Thus infeasibility results when  the slack pressure ($p_1$) fails to sustain the given mass flow $f_{12}$  for any pressure $p_2 > 0$. 

\begin{figure}
    \centering
    \includegraphics[scale=0.8]{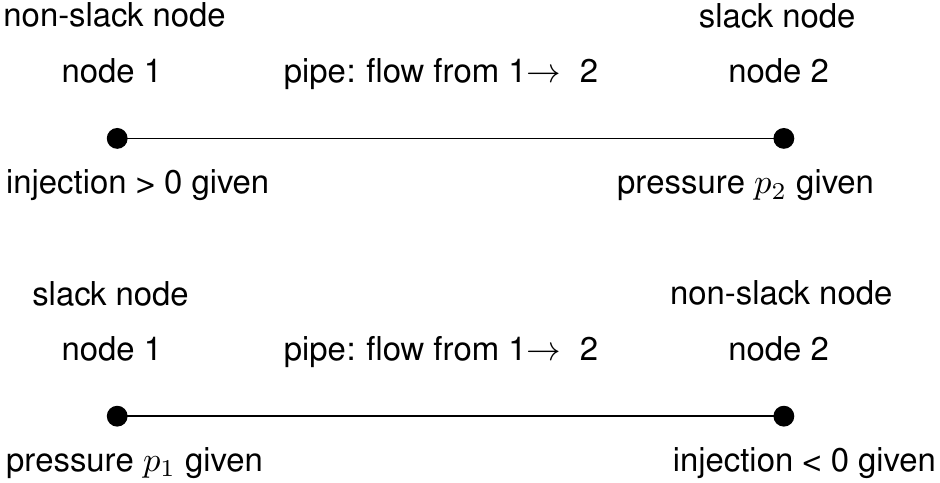}
    \caption{Illustration of the two possibilities for a single pipe instance considered in Sec. \ref{subsec:single_pipe}. In the first case (top), the problem is always feasible, but in the second case (bottom) infeasibility may occur.}
    \label{fig:single-pipe}
\end{figure}

Thus, we have demonstrated that infeasibility can occur (solution may not exist) even in the case of a single pipe by analysing one non-linear equation, but this analysis cannot be performed for a large network since we will then have a system of  non-linear equations that need to be considered simultaneously.
We now present a brief overview of the algorithm used to solve the $\GF$ problem.

\section{Algorithms} \label{sec:algo}
The algorithm we use to solve Eq. \eqref{eq:GF} is a classical \NR algorithm without line search.
Given the $k$\textsuperscript{th} iterate pressures $p_i^{(k)}$ and mass flows $\bm f^{(k)}$ for $k= 0, 1, 2, \dotsc$, the \NR iteration computes the $(k+1)$\textsuperscript{th} iterate from the increments  $\Delta p_i^{(k)}$ and $\bm \Delta \bm f^{(k)}$ as
\begin{subequations}
\begin{align}
& p_i^{(k+1)} = p_i^{(k)} + \Delta p_i^{(k)} \,\, &\; \forall \; i \in N,\\
& f_{ij}^{(k+1)} = f_{ij}^{(k)} + \Delta f_{ij}^{(k)} \,\, &\; \forall \; (i, j) \in P \cup C. 
\end{align}
\end{subequations}
The increments are determined from the residuals
\begin{subequations}
\begin{align}
& r_{\mathrm{pipe}}^{(k)} = \Pi \left(p_i^{(k)}\right) - \Pi\left(p_j^{(k)}\right) - \beta_{ij}\lvert f_{ij}^{(k)}\rvert f_{ij}^{(k)} \;  \in \mathbb{R}, \\
& r_{\mathrm{comp}}^{(k)} = p_j^{(k)} - \alpha_{ij} p_i^{(k)} \; \in \mathbb{R}, \\
& \bm r_{\mathrm{node}}^{(k)} = \bm A  \bm f^{(k)} - \bm q \;  \in \mathbb R^{|N_{ns}|}, 
\end{align}
\end{subequations}
as well as the linear system defined by the Jacobian:
\begin{subequations}
\begin{align}
& \Pi'\left(p_i^{(k)}\right)\Delta p_i^{(k)} - \Pi'\left(p_j^{(k)}\right)\Delta p_j^{(k)}
\notag \\
&\quad - 2\beta_{ij}\lvert f_{ij}^{(k)} \rvert \Delta f_{ij}^{(k)} = - r_{\mathrm{pipe}}^{(k)} \quad & \forall (i, j) \in P, \\ 
& \Delta p_j^{(k)} - \alpha_{ij} \Delta p_i^{(k)} = - r_{\mathrm{comp}}^{(k)}  \quad & \forall (i, j) \in C, \\ 
& \bm A  \bm \Delta \bm f^{(k)}   = - \bm r_{\mathrm{node}}^{(k)}, \\
& \Delta p_i^{(k)} = 0 \; & \forall i \; \in N_s.
\end{align}
\label{eq:NR_system}
\end{subequations}

In Appendix~\ref{appendix}, we state and prove Lemma~\ref{lemma:Jacobian_injective} which yields sufficient conditions for the invertibility of the Jacobian and allows us to make a suitable initial guess.
We set termination conditions for the algorithm by specifying  a small tolerance for the residual below which the solver is said to have converged to a solution. If the required tolerance is not achieved when $k = 2000$ iterations are complete, we declare failure and terminate the iteration.

It is well-known that it is not possible to determine \emph{a priori} if and when a \NR algorithm will converge to a solution for a general non-linear system of equations such as \eqref{eq:GF}. 
The solvability of a system is usually determined by looking for a solution using a variety of methods, and if one is found, invoking  uniqueness to conclude that it must be \emph{the} solution. Thus, when the \NR algorithm is used to solve the system \eqref{eq:GF}, exactly one of the following events will occur:

\begin{enumerate}[label=(E\arabic*)]
    \item  The iterates converge to a \emph{generalized} solution that satisfies Assumption~\ref{assumption:domain-Pi-relaxed}. \label{list:unique}
    \item  The iterates converge to a \emph{generalized} solution that \emph{does not} satisfy Assumption~\ref{assumption:domain-Pi-relaxed}. \label{list:converged}
    \item  The iterates do not converge to a solution. \label{list:not_converged}
\end{enumerate}

If \ref{list:unique} occurs,  uniqueness (Theorem~\ref{theorem:graph_uniqueness}) provides definitive  closure to the problem, for either the solution is \emph{feasible} (Definition~\ref{def:feasible}) or it is un-physical, and thus determines the problem to be infeasible.  Nothing conclusive may be said in case of  \ref{list:converged} and \ref{list:not_converged} for they could represent either infeasibility or failure of the algorithm to converge even though a solution exists. 

\subsection{\NR for Ideal Gas EoS}
\label{subsec:pressure-correction-ideal}
Suppose that the classical \NR algorithm is applied to the $\GF$ problem in Eq. \eqref{eq:GF} for an ideal gas ($b_2 = 0$ in Eq. \eqref{eq:rho-nd}), and results in a feasible solution.  The form of the potential function (Fig.~\ref{fig:domain}) ensures that the positive pressure solution always exists and is compatible with the compressors at all nodes, so that \ref{list:converged} cannot occur. 
If there is negative mass flow through any compressor, then the solution is un-physical and the problem is infeasible.

% Replacing all pressures by their absolute values we then obtain another solution.
% This is due to the special structure of the function $\Pi$ in the ideal gas case that satisfies $\Pi(-p) = \Pi(p)$, $p_j/p_i = |p_j|/|p_i|$, and modifying the negative pressures to their absolute values, the corrected solution satisfies the $\GF$ equations and the uniqueness property in Sec. \ref{subsec:uniqueness} ensures that this is the only solution with positive pressures. 
% Thus, the correction will result in \ref{list:converged} or \ref{list:infeasible} and result in a definite conclusion in both cases.

\subsection{\NR for Non-ideal Gas (CNGA EoS)}
\label{subsec:pressure-correction-cnga}

If the classical \NR algorithm applied to the $\GF$ problem for a non-ideal gas ($b_2 > 0$ in Eq. \eqref{eq:rho-nd}) 
results in a solution, both \ref{list:unique} and \ref{list:converged} are possible.
The \emph{generalized} solution is evaluated to determine whether it satisfies Assumption~\ref{assumption:domain-Pi-relaxed}, i.e., \ref{list:unique}. 
In case of \ref{list:unique}, any node with a negative potential or any compressor with a negative mass flow implies infeasibility. For nodes with positive potentials, we know from Fig. \ref{fig:domain} that unique positive pressures exist which can be computed either from  pipe equations Eq. \eqref{eq:GFpipe} or calculated systematically in a second run of the \NR algorithm as described below.

Suppose the algorithm terminates with negative pressures.
In order to initiate another run of \NR,  we start with an initial guess that perturbs the relevant solution components to positive values, say $|p_{n_1}^{*}|, |p_{n_2}^{*}|, \dotsc$ while keeping all the other solution components unchanged.
Note that residuals which do not depend on these nodal pressures will remain zero. 

% Then from Eq~\eqref{eq:NR_system} and Lemma~\ref{lemma:gen_soln}, the iterates are seen to satisfy 
% \begin{subequations}
% \begin{align}
% & \Delta p_{i}^{*} = 0 \; \forall \; i \in N_s \\
% & \Delta f_{ij}^{*} = 0 \; \forall \; (i, j) \in P \cup C \notin \mathcal{C}_i
% \end{align}
% \label{eq:NR_pressure_correction}
% \end{subequations}
% Note that depending on the network structure, pressures at some non-slack nodes could also remain unchanged. In any case, Eq~\eqref{eq:NR_pressure_correction} makes clear how this corresponds to a search for a modified set of non-slack pressures and fluxes.
%---------------------------------
\section{Results} \label{sec:results}
We now present the results of several computational experiments that corroborate the effectiveness of the proposed algorithm in solving the steady-state gas flow problem for both ideal and non-ideal gas. Henceforth, we shall use the term non-ideal gas to mean one that is described by the CNGA EoS. The algorithms presented in Section \ref{sec:algo} are all implemented using the Julia programming language \cite{Bezanson2017} and the code is released as an open-source Julia package: \url{https://github.com/kaarthiksundar/GasSteadySim.jl}. Furthermore, all the computational experiments were run on a MacBook Pro with 2.8 GHz Quad-Core Intel Core i7 processor and 16 GB of RAM. The code to run all the experiments presented in this article and generate the plots and tables in the subsequent paragraphs can be found at the website: 
\url{https://github.com/kaarthiksundar/GasFlowRuns}.

\subsection{Description of the test cases} \label{subsec:test-cases}
The first test case is that of a single pipeline. We consider a single, 36-inch diameter pipe with  friction factor $\lambda = 0.01$ that connects two nodes $1$ and $2$. Node $1$ is a slack node with a slack pressure of \SI{4.3}{\mega\pascal}. Node $2$ is a non-slack node with a gas withdrawal of \SI{275}{\kilogram\per\second}. 
The second set of test cases comprises five natural gas networks taken from GasLib \cite{Schmidt2017}. The five instances are GasLib-11, GasLib-24, GasLib-40, GasLib-134 and GasLib-582 with  11, 24, 40, 134 and 582 nodes in the network respectively. The values for the pipe diameter, friction factor, nodal injections etc. for each of these networks can be obtained at \url{https://gaslib.zib.de/} and \url{https://github.com/kaarthiksundar/GasFlowRuns}. The GasLib instances, apart from having pipes and compressors, may also contain some additional physical components such as valves and regulators as well as some non-physical components such as resistors and  loss-resistors. In our computational experiments, the non-physical components are treated as pass-through elements, i.e., they are modelled to allow any amount of gas to flow through them without incurring a pressure drop. All the valves are assumed to be closed, and the regulators (pressure-reducing components) are assumed to not allow any reduction in pressure. Each GasLib network specifies a base amount of injection or withdrawal in each node that satisfies global network balance. 

For each GasLib network, 500 instances are generated according to the following procedure: The node with the largest injection in the network is chosen to be the slack node where the slack pressure is fixed to \SI{5}{\mega\pascal}. The gas injection in each non-slack node is then scaled using a random uniform factor in the range \numrange{0.9}{1.1}. Finally, the compressor ratio is set to a random value in the range \numrange{1.1}{1.4} for each compressor and the pressure-reduction ratio for each regulator is set to 1. This procedure is repeated to generate 500 instances for each network.  

\subsection{Importance of non-dimensionalization} \label{subsec:nd-value}
This set of results is aimed at showing the influence of non-dimensionalization  in the convergence of the \NR algorithm  for the steady-state gas flow problem. To that end, we consider the gas flow problem on the GasLib networks and solve both the dimensional and non-dimensional versions of the problem using the algorithm in Section \ref{sec:algo} recognising that the dimensional version of the problem is equivalent to setting the nominal values $\rho_0 = p_0 = v_0 = A_0 = 1$ instead of Eq. \eqref{eq:nominal-quantities}. The \NR  algorithm was successful in computing the gas flow solution in \emph{every instance when the equations were non-dimensionalized}. As for the dimensional version of the steady-state equations, Table \ref{tab:dim-vs-non-dim} shows the number of instances (out of 500) in which the gas flow problem was solved successfully.  The table shows the value of non-dimensionalizing the equations in ensuring convergence with the standard \NR algorithm when applied to the gas flow problem for both ideal and non-ideal gases. 
\begin{table}[!htb]
    \centering
    \caption{Number of instances (out of 500) when the \NR algorithm solves the $\GF$ problem for ideal and non-ideal gas using the dimensional form of the steady-state equations. The algorithm solves all 500 instances when non-dimensionalizing.}
    \label{tab:dim-vs-non-dim}
    \begin{tabular}{ccc}
        \toprule
        instance & ideal & non-ideal \\
        \midrule
        \csvreader[late after line=\\]{tables/d-vs-nd-counts.csv}{1=\one,2=\two,3=\three,4=\four,5=\five}{\one & \two & \four}
        \bottomrule
    \end{tabular}
\end{table}

While these results demonstrate the importance of non-dimensionalization by considering  non-trivial networks, it can also be seen in the case of a trivial network, namely, the case of a single pipe as discussed in Section~\ref{subsec:single_pipe}. 
Recall that we considered the equation
\begin{equation}
    \Pi(p_2) = \Pi(p_1) -  \beta_{12}f_{12}^2 
\end{equation}
for given positive values of $p_1, f_{12}, \beta_{12}$.
If the equation is feasible, we argued that it has a unique positive solution $p_2$. For a non-ideal gas described by the CNGA EoS, the expression for $\Pi$ corresponds to a cubic polynomial and the positive root needs to be approximated numerically. We found that for the non-dimensional version of the equation, an appeal to standard root-finding routines  led to accurate and expected results, but the same routines produced  spurious solutions when applied to the dimensional version of the problem.

\subsection{Influence of EoS on Flow Behaviour} \label{subsec:ideal-vs-non-ideal}
While it is convenient to assume ideal gas behaviour for flow in pipeline networks, the assumption deviates from reality; the ideal gas EoS is not appropriate for natural gas in transmission pipeline networks. We compare the flow behaviour in different test cases assuming the ideal gas EoS as well as the CNGA EoS, and show the importance of taking this into account when solving the gas flow problem.
\begin{figure}[!htb]
    \centering
    \includegraphics[scale=1.0]{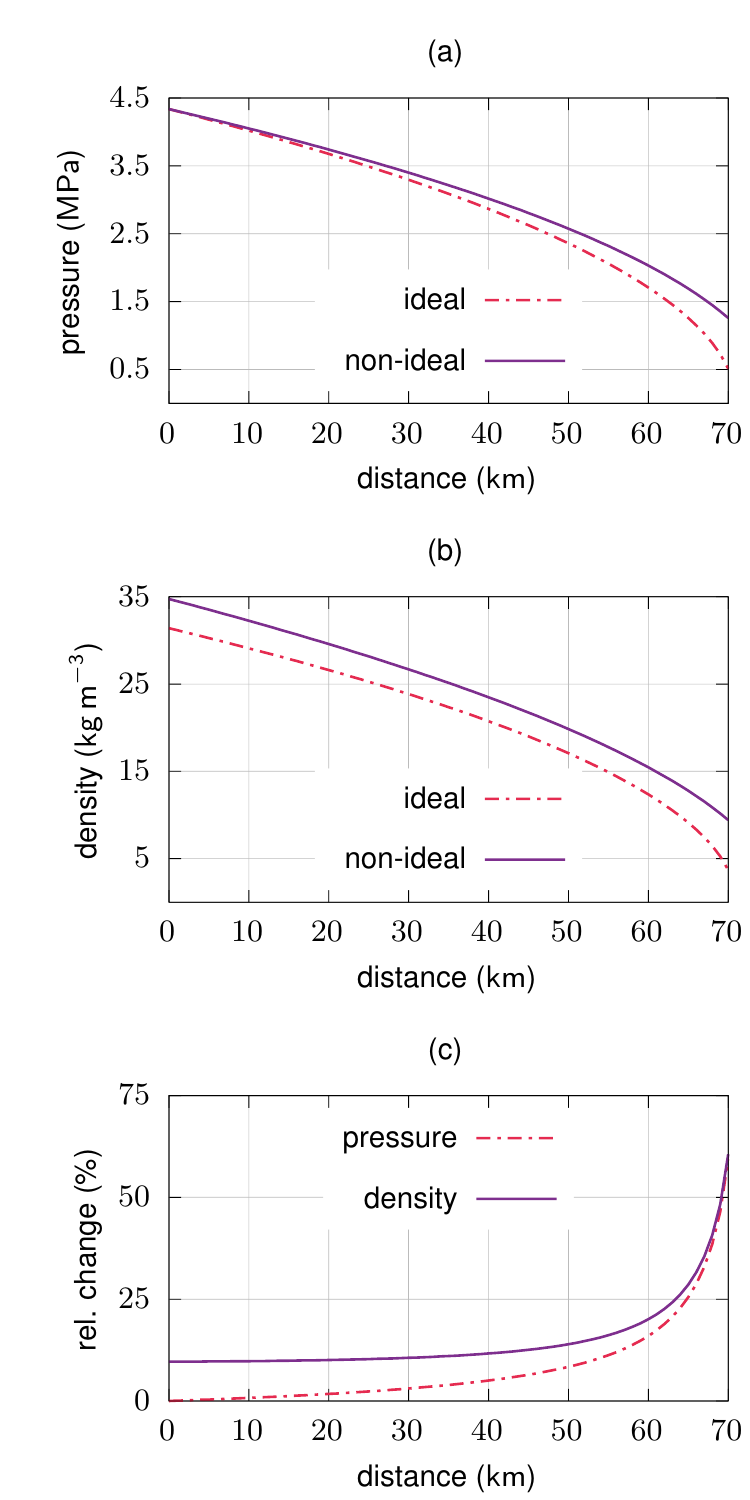}
    \caption{Pressure and density solutions obtained by solving the gas flow problem for a single pipe with an ideal gas EoS and CNGA EoS respectively. The Figure (c) shows the percentage change between the pressure and density solutions for ideal and non-ideal gases, relative to the non-ideal gas. }
    \label{fig:ideal-vs-nonideal}
\end{figure} 
To that end, we first consider the test case of a  single pipe and solve the gas flow problem with both ideal and CNGA  EoS. Fig. \ref{fig:ideal-vs-nonideal} assimilates the results for the problem when the pipe length is \SI{70}{\km}. Fig. \ref{fig:ideal-vs-nonideal} (a) and (b) show the pressure and density solutions for both the ideal and non-ideal EoS respectively. Fig. \ref{fig:ideal-vs-nonideal} (c) shows the percentage change in the pressure and density solutions relative to the case of the non-ideal gas. From this figure, we see that as the length of the pipe increases, the relative percent change  can increase to as much as \SI{50}{\percent}. This illustrates that use of the ideal gas EoS can underestimate the pressure and density drop to a substantial extent, thus emphasising the need to use non-ideal gas EoS when solving the gas flow problem. 

The box plots in Fig. \ref{fig:pressure-max} and  Fig. \ref{fig:density-max} show the statistics of maximum relative difference in nodal pressures and densities for the GasLib networks when a CNGA EoS is used. Unlike the case of the single pipe where the deviation in the pressure solution was large, here the maximum relative difference over all the nodes of the system is fairly small. In this context we remark that this deviation is a function of multiple factors like pipe length, friction factor and amount of flow passing through the pipeline and hence, it is very difficult to ascertain a-priori whether  an ideal gas EoS  will be sufficient for a given network.

\begin{figure}[!htb]
    \centering
    \includegraphics[scale=1.0]{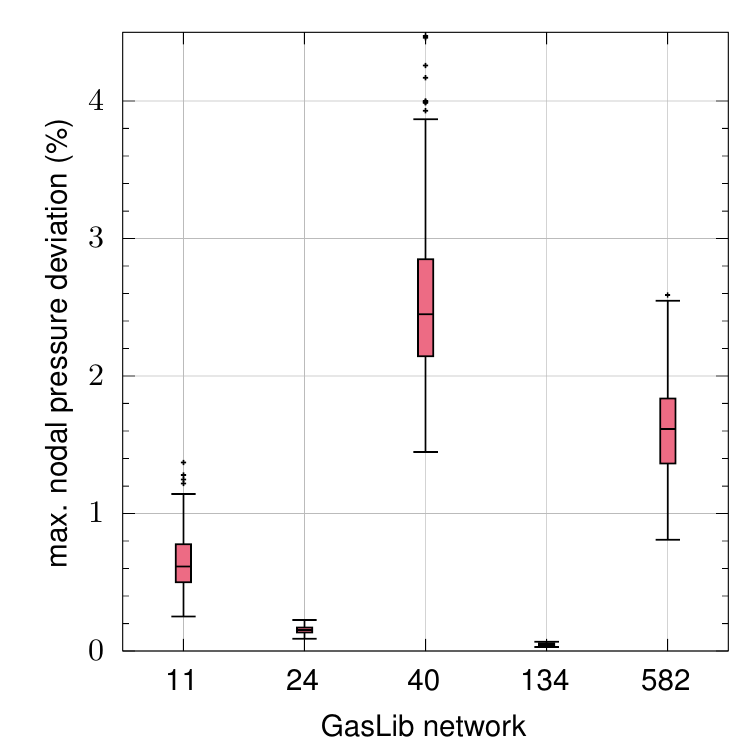}
    \caption{Box plot of the maximum relative nodal pressure deviation between the solution for ideal and non-ideal gases, over all the nodes.}
    \label{fig:pressure-max}
\end{figure}

\begin{figure}[!htb]
    \centering
    \includegraphics[scale=1.0]{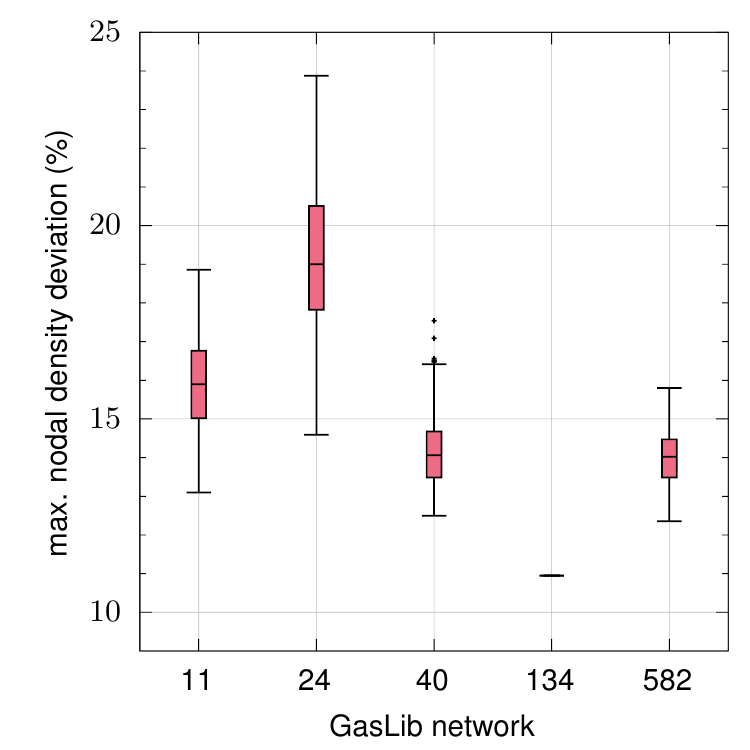}
    \caption{Box plot of the maximum relative nodal density deviation between the  solution for ideal and non-ideal gases, over all the nodes.}
    \label{fig:density-max}
\end{figure}

\subsection{Infeasibility detection} \label{subsec:infeasible} 
The discussions in Section~\ref{sec:algo} make it clear that whenever \ref{list:unique} occurs for the ideal or CNGA EoS,  negative potential at a node or negative mass flow in a compressor implies the system is infeasible. However, in  \ref{list:converged}, there is no way to conclude whether or not the problem is infeasible.

No cases of infeasibility or non-convergence occurred in any of the GasLib networks when perturbing the base withdrawals randomly in the range \numrange{0.9}{1.1} as described in Sec. \ref{subsec:test-cases}.
Hence, to present  computational results that illustrate the algorithm's ability to detect infeasibility arising in the data, we generate 500 additional instances of each GasLib network by perturbing the base withdrawals randomly in the range \numrange{0.75}{1.25}  as described in \cite{Singh2020}. 

All 500 instances for GasLib-11, GasLib-24, and GasLib-134 and were found to be feasible for both ideal and non-ideal gases. However, for both GasLib-40 and GasLib-582 instances, when the \NR algorithm was run for both an ideal and non-ideal gas, the algorithm detected 8 and 103 infeasible instances, respectively, each due to  negative flow in a compressor. The motivating study \cite{Singh2020} observed an almost identical number of infeasible cases for GasLib-40 in their  experiments on the $\GF$ problem with an ideal gas. The infeasibility detection approach in this article can be used for non-ideal gas modeling, and furthermore can localize the source of infeasibility and identify the compressor(s) or nodes causing it. %whereas their approach only provides an infeasibility certificate without pin-pointing the source of infeasibility. 
The previous study \cite{Singh2020} presented a \NR solver that did not converge for any of the 500 instances unless a suitable initial guess was provided  by solution of a relaxed version of the equations.  Our technique enables convergence every time, even with a random initial guess.

\subsection{Multiple Slack Nodes}
A variant of the previous study is used to demonstrate that the solver can handle multiple slack nodes without issue. 
Examining the solution of the GasLib-40 base case with a single slack node (\#38  with slack pressure \SI{5}{\mega\pascal}), we designate additional slack nodes \#20 and \#40 with pressures \SI{7.41}{\mega\pascal} and \SI{2.80}{\mega\pascal} respectively. 
Then 500 instances are generated  by scaling the gas injection in each non-slack node using a random uniform factor in the range \numrange{0.75}{1.25} as before, and each compressor ratio is set to a random value in the range \numrange{1.45}{1.55}.
For this set of 500 runs, the \NR solver converged to a solution in \emph{every} instance for both ideal and non-ideal gas. For the ideal gas,  185 instances were infeasible because of negative flow in a compressor, while for the non-ideal gas the corresponding number was 174.
Note that because of the presence of multiple slack nodes, even slight variations in compressor ratios can make the problem infeasible. There is a dramatic increase in instances of infeasibility despite varying the compression ratio in a narrow range around the base case. We also did not observe a considerable change in the computation time of the algorithm to solve the GasLib-40 multiple slack runs in comparison to the single slack runs.

\subsection{Computational performance of the algorithm} \label{subsec:computation}
This set of results aims to summarize the statistics related to the computational performance of the algorithm when applied to the GasLib networks. 
% \textcolor{red}{We should remove this table. pressure correction is no longer a major theme of the article.
% The Table \ref{tab:num-pressure-corrections} shows the number of instances (out of 500) a pressure correction step was required to compute the solution to the gas flow problem when using a CNGA EoS. For the gas flow problem with ideal gas, pressure correction simply involves taking the absolute value of the pressure as discussed in Section \ref{subsec:pressure-correction-ideal} and hence it is not reported. }
% %
% \begin{table}[!htbp]
%     \centering
%     \caption{Number of instances (out of 500) where pressure correction was performed  for non-ideal gas with CNGA EoS. The pressure correction for ideal gas is trivial, and hence not considered.\textcolor{red}{note to remove this table}}
%     \label{tab:num-pressure-corrections}
%     \begin{tabular}{cc}
%         \toprule
%          instance & count (out of 500) \\
%          \midrule
%          \csvreader[late after line=\\]{tables/pressure-correction.csv}{1=\instance,3=\cn}{\instance & \cn}
%          \bottomrule
%     \end{tabular}
% \end{table}

\begin{table}[!htb]
    \centering
    \caption{Average number of iterations required by the \NR algorithm to solve the $\GF$ problem for ideal and CNGA EoS.}
    \label{tab:iterations}
    \begin{tabular}{ccc}
        \toprule
         instance & ideal & non-ideal \\
         \midrule
         \csvreader[late after line=\\]{tables/iterations.csv}{1=\instance,2=\ideal,3=\nonideal}{\instance & \ideal & \nonideal}
         \bottomrule
    \end{tabular}
\end{table}

\begin{figure}[!htb]
    \centering
    \includegraphics{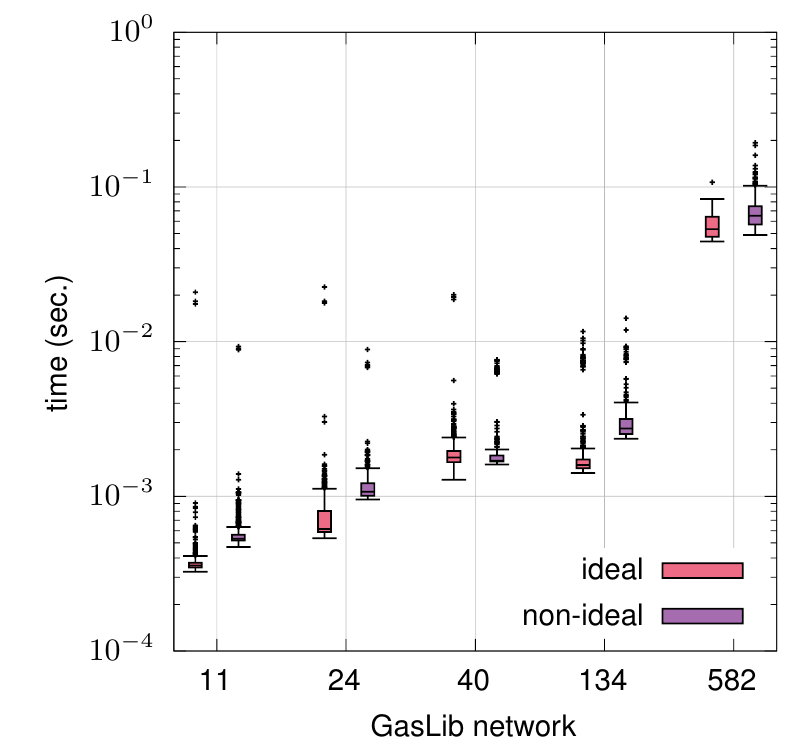}
    \caption{Box plot of the computation times for the 500 instances for every GasLib network considered in this article.}
    \label{fig:time}
\end{figure}

The Table \ref{tab:iterations} shows the average number of iterations taken by the \NR  algorithm to converge to the solution of the gas flow problem.  As can be observed from Table \ref{tab:iterations}, the number of iterations required rarely exceeds 20 for all the GasLib networks. This is important because when considering a  large-scale network, each iteration is computationally expensive and hence the algorithm could become impractical to use if it involves a large number of iterations even for simple networks. 

Fig. \ref{fig:time} shows the statistics of computation time taken by the algorithm to solve the gas flow problem for the 500 instances of each GasLib network considered in this article. Fig. \ref{fig:time} demonstrates the effectiveness of the algorithms in being able to compute a solution to the gas flow problem within a second of computational time. 

\section{Conclusion} \label{sec:conclusion}
We have presented the  steady-state network flow equations for a non-ideal gas and proved the uniqueness of mathematical solutions to the non-linear gas flow system under certain assumptions. 
Specifically, for an ideal gas and the non-ideal CNGA equation of state, we have outlined definite conditions under which solutions obtained by a non-linear solver may be acceptable or may determine infeasibility of the problem. 
Previous studies have asserted that the system of equations cannot be solved in practice with a classical \NR method without a carefully constructed initial guess, even for an ideal gas. 
However, we show that in all test cases considered, our  implementation of the \NR  algorithm converges in very few iterations independently of the initial guess. 
However, the \NR method is not an algorithm for which convergence can ever be guaranteed when the solution is unknown. Even the celebrated theorem of  Kantorovich \cite{Ciarlet2012Jul} that provides sufficiency conditions for the \NR iteration, does so in a manner that is not useful in practice. As such, one must be mindful of the fact that all guidelines offered can only be heuristic and empirical. However, our non-dimensionalization scheme is based on general thumb rules of numerical analysis - (i) That scaling the problem variables can lead to better conditioning which always aids the convergence of \NR (ii) while some problems are inherently ill-conditioned, bad scaling can turn even a well-conditioned problem into an ill-conditioned one.  Following these ideas, our non-dimensional scheme is one instance that looks to scale the variables using given data so that they are all of similar order of magnitude. While we have tested our scheme for multiple test problems, the general principle holds for any nonlinear system of equations.
We propose that the resulting improvement in conditioning of the entire system of equations leads to this desired performance.  
Finally, extensive computational experiments corroborate the effectiveness of the algorithms on benchmark instances, as well as the ability of the algorithm to identify and localize infeasibility of given data for a pipeline network.
Future work would focus on extending the flow solution problem to operational optimization in natural gas networks while accounting for non-ideal gas behaviour. 

\appendices
\section{Invertibility of the Jacobian}
\label{appendix}
Consider the following homogeneous linear system in $(x_k, y_{ij})$, where $k \in N,  \; (i,j) \in P \cup C$ defined for a gas network where the real numbers $\mathcal{B}_{ij} \neq 0, \; \mathcal{A}_i \neq 0, \; \alpha_{ij} > 0$ and all $\mathcal{A}_i$ have the same sign:
\begin{subequations}
\begin{align}
& \mathcal{A}_i x_i - \mathcal{A}_j x_j 
- \mathcal{B}_{ij} y_{ij} = 0 \quad \forall (i, j) \in P, \\ 
& x_j - \alpha_{ij} x_i = 0  \quad  \forall (i, j) \in C, \\ 
& \bm A  \bm y   = 0, \\
& x_i = 0 \quad  \forall i \; \in N_s,
\end{align}
\label{eq:Jacobian_injective}
\end{subequations}
\begin{lemma}
\label{lemma:Jacobian_injective}
The only solution to the homogeneous linear system defined by Equation~\eqref{eq:Jacobian_injective} is the trivial solution.
\end{lemma}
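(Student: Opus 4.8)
The plan is to prove nonsingularity by a discrete maximum-principle (extremal-set) argument on auxiliary node ``potentials,'' handling the compressor flows in a separate combinatorial step. First I would introduce $\pi_i := \mathcal{A}_i x_i$ for $i \in N$. In the Jacobian at a physical iterate one has $\mathcal{A}_i = \Pi'(p_i) = \rho_i > 0$ and $\mathcal{B}_{ij} = 2\beta_{ij}|f_{ij}| > 0$ (the opposite global sign of the $\mathcal{A}_i$ is entirely symmetric), so I carry out the argument in this configuration, where $\pi_i$ and $x_i$ share a sign and $x_i = 0 \iff \pi_i = 0$. The four blocks of \eqref{eq:Jacobian_injective} then read: the pipe block as the resistive relation $\pi_i - \pi_j = \mathcal{B}_{ij} y_{ij}$, so that $\operatorname{sign} y_{ij} = \operatorname{sign}(\pi_i - \pi_j)$ since $\mathcal{B}_{ij} > 0$; the compressor block as $\pi_j = \gamma_{ij}\pi_i$ with $\gamma_{ij} := \alpha_{ij}\mathcal{A}_j/\mathcal{A}_i > 0$, which is the linear analogue of Lemma~\ref{lemma:Pi-compressor-monotonic} and forces $\pi_i > 0 \iff \pi_j > 0$; and the slack block as $\pi_i = 0$ at every slack node.

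The main step is to show $\pi \equiv 0$. Suppose not; since the system is homogeneous I may replace the solution by its negative if needed and assume $V^+ := \{i \in N : \pi_i > 0\}$ is nonempty, noting that no slack node lies in $V^+$. I would then evaluate the net flow out of $V^+$, namely $\sum_{i \in V^+}(\bm{A}^{\mathrm{full}}\bm{y})_i$. This vanishes, because every $i \in V^+$ is a non-slack node where $(\bm{A}^{\mathrm{full}}\bm{y})_i = (\bm{A}\bm{y})_i = 0$. Collecting edge contributions, the same sum equals a sum over the boundary edges of $V^+$ only. A boundary compressor is impossible, since $\pi_i > 0 \iff \pi_j > 0$ places both endpoints in $V^+$ or neither; and each boundary \emph{pipe} has one endpoint of positive potential and one of nonpositive potential, so with $\mathcal{B}_{ij} > 0$ its flow is directed out of $V^+$ and contributes strictly negatively. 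Hence a nonempty boundary makes the total strictly negative, contradicting the value $0$; and $V^+$ has empty boundary only if it is a union of whole connected components carrying no slack, which is excluded once one assumes (as is standard, and necessary for invertibility) that each component contains a slack node per \ref{assumption:slacks}. Thus $V^+ = \varnothing$, so $\pi_i \le 0$ everywhere; the identical argument applied to $-\pi$ gives $\pi_i \ge 0$, whence $\pi \equiv 0$ and therefore $x \equiv 0$ because each $\mathcal{A}_i \ne 0$.

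It remains to recover $\bm{y} = 0$. With $x \equiv 0$, the pipe block gives $\mathcal{B}_{ij} y_{ij} = 0$, hence $y_{ij} = 0$ on every pipe since $\mathcal{B}_{ij} \ne 0$. Feeding these zeros into $\bm{A}\bm{y} = 0$ leaves a flow supported on the compressor edges and conserved at every non-slack node. Here \ref{assumption:cycle-pipe} ensures the compressor subgraph has no cycle (every cycle uses a pipe), so it is a forest, and \ref{assumption:path-pipe} ensures no compressor-only path joins two slack nodes, so each tree of the forest holds at most one slack. A conserved flow on a forest with at most one terminal per tree must vanish---peeling non-slack leaves forces their incident flows to zero in succession---so $y_{ij} = 0$ on every compressor as well, giving the trivial solution.

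The step I expect to be most delicate is the treatment of compressors, and it is exactly where Assumptions~\ref{assumption:path-pipe} and \ref{assumption:cycle-pipe} are used. A compressor is a lossless ``transformer'' ($\pi_j = \gamma_{ij}\pi_i$ with $y_{ij}$ free), so the naive Dirichlet-energy identity $\sum_{(i,j)\in P}\mathcal{B}_{ij} y_{ij}^2 + \sum_{(i,j)\in C}(\pi_i - \pi_j)y_{ij} = 0$ leaves an indefinite compressor term that cannot be signed; a pure energy argument therefore fails. The maximum-principle argument circumvents this by using only that compressors preserve the sign of $\pi$ (Lemma~\ref{lemma:Pi-compressor-monotonic}), which rules out boundary compressors, while the final step confines the compressor flows to a forest on which conservation alone forces them to vanish. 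Secondary care is needed to sign each boundary-pipe contribution---hence the reliance on $\mathcal{B}_{ij} > 0$, the form $2\beta_{ij}|f_{ij}|$ of the Jacobian entries---and to invoke connectivity so that a slack node always lies outside any nonempty positive-potential set.
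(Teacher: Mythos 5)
Your proof is correct, but it takes a genuinely different route from the paper's. The paper disposes of Lemma~\ref{lemma:Jacobian_injective} in two lines: it observes that the homogeneous system \eqref{eq:Jacobian_injective} is itself an instance of the $\GF$ problem---with node potentials $\pi_i=\mathcal{A}_i x_i$, a linear (hence monotone) pipe law, compressors of positive ratio $\alpha_{ij}\mathcal{A}_j/\mathcal{A}_i$, and zero slack potentials and injections---so Theorem~\ref{theorem:graph_uniqueness} applies and the trivial solution is the only one. You never invoke Theorem~\ref{theorem:graph_uniqueness}; instead you give a self-contained argument: a discrete maximum principle on the cut around $V^+=\{\,i : \pi_i>0\,\}$, using that boundary edges cannot be compressors (sign preservation, the linear analogue of Lemma~\ref{lemma:Pi-compressor-monotonic}) and that every boundary pipe carries flow strictly out of $V^+$, followed by leaf-peeling of the compressor forest, where \ref{assumption:cycle-pipe} and \ref{assumption:path-pipe} enter exactly as they do, implicitly, in the paper's route through Theorem~\ref{theorem:graph_uniqueness} and Proposition~\ref{prop:flow_increase}. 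What the paper's reduction buys is brevity and reuse of machinery; what yours buys is independence from the path-decomposition propositions imported from \cite{Singh2019,Singh2020}, plus an explicit accounting of where each hypothesis is needed---in particular you correctly flag that every connected component must contain a slack node, a point the paper leaves implicit in \ref{assumption:slacks}.

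One caveat you introduce is actually a repair rather than a weakness: you strengthen the stated hypothesis $\mathcal{B}_{ij}\neq 0$ to $\mathcal{B}_{ij}>0$. The lemma is false as literally stated: on a triangle with a single slack node, zero injections, and pipe coefficients of mixed sign (say $\mathcal{B}_{12}=\mathcal{B}_{23}=1$, $\mathcal{B}_{13}=-2$), the circulation $y_{12}=y_{23}=t$, $y_{13}=-t$ with $x_2=-t$, $x_3=-2t$ solves \eqref{eq:Jacobian_injective} for every $t$. The paper's own proof tacitly needs the same sign condition, since identifying \eqref{eq:Jacobian_injective} with a $\GF$ problem requires the $\mathcal{B}_{ij}$ to play the role of the positive resistances $\beta_{ij}$. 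Because the Jacobian supplies $\mathcal{B}_{ij}=2\beta_{ij}\lvert f_{ij}^{(0)}\rvert>0$, your restriction covers exactly the case in which the lemma is used.
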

\begin{proof}
It is clear that $x_i = y_{ij} = 0$ is a solution.
If we identify the system as a $\GF$ problem governing ($x_i$, $y_{ij}$), then we see that the hypotheses of the uniqueness theorem (Theorem~\ref{theorem:graph_uniqueness}) hold, implying that there are no other solutions.
\end{proof}
In order to use the \NR iteration scheme, the Jacobian at the starting point must be invertible.  
In light of Lemma~\ref{lemma:Jacobian_injective}, if we now consider the linear system \eqref{eq:NR_system} obtained at an initial point,  the invertibility of the Jacobian is assured if  $$f_{ij}^{(0)} \neq 0 \; \forall \; (i, j) \in P, \quad \Pi'\left(p_i^{(0)}\right) > 0 \; \forall \; i \in N.$$
\begin{remark}
Note that if \ref{assumption:cycle-pipe} does not hold, then Equation~\eqref{eq:Jacobian_injective} has non-trivial solutions and the Jacobian becomes singular.
\end{remark}
\printbibliography
\vskip 0pt plus -1fil
\begin{IEEEbiography}[{\includegraphics[width=1in,height=1.25in,clip,keepaspectratio]{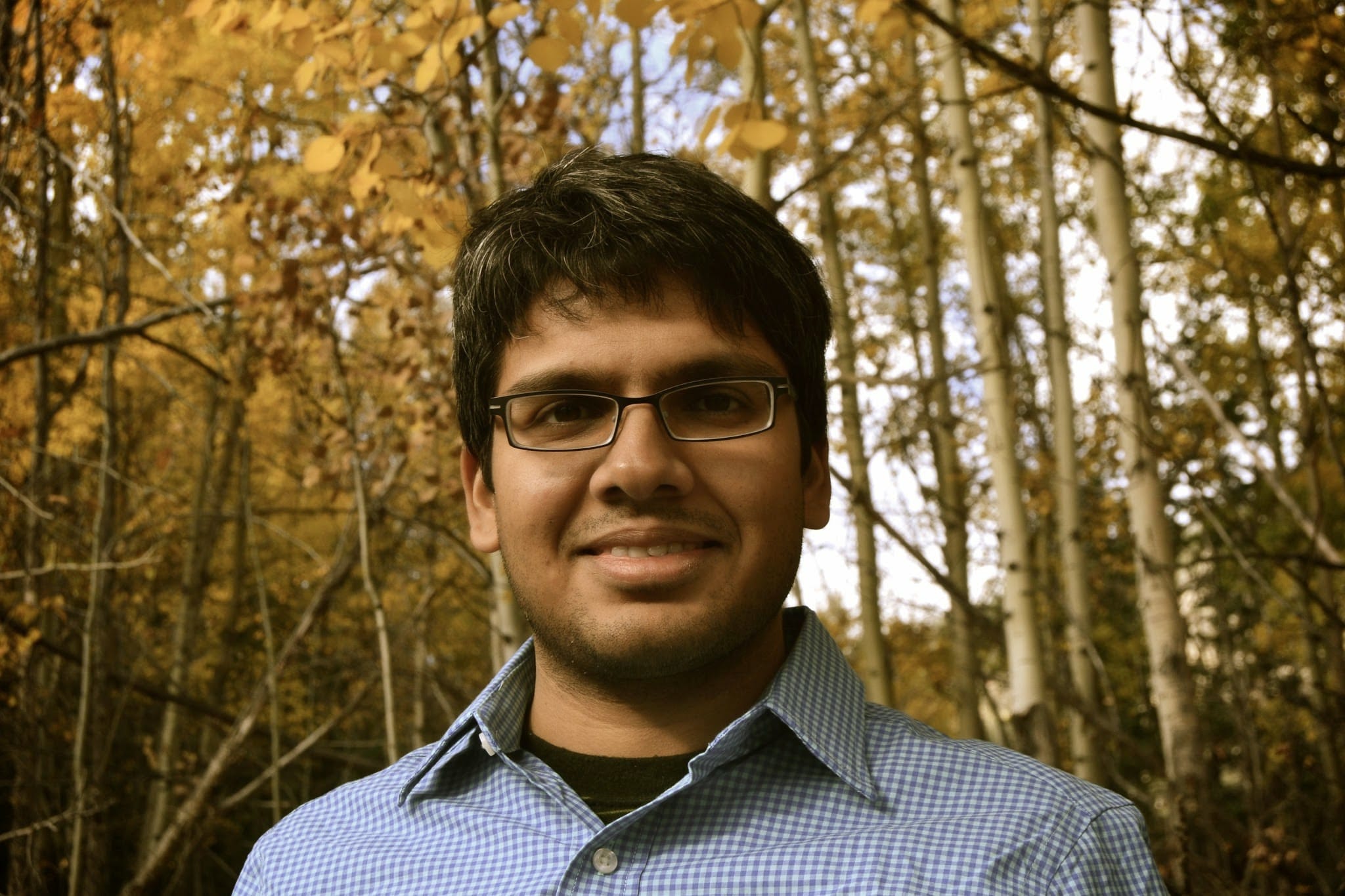}}]
{Shriram Srinivasan}
earned his M.S. in Mathematics and Ph.D. in Mechanical Engineering, all from Texas A\&M University, College Station. His research interests are in computational mechanics and reduced-order models of structured systems such as fracture networks and gas pipeline networks. He is a staff scientist in the Applied Mathematics and Plasma Physics Group of the Theoretical Division at Los Alamos National Laboratory, New Mexico.
\end{IEEEbiography}
\vskip 0pt plus -1fil
\begin{IEEEbiography}[
{\includegraphics[width=1in,height=1.25in,clip,keepaspectratio]{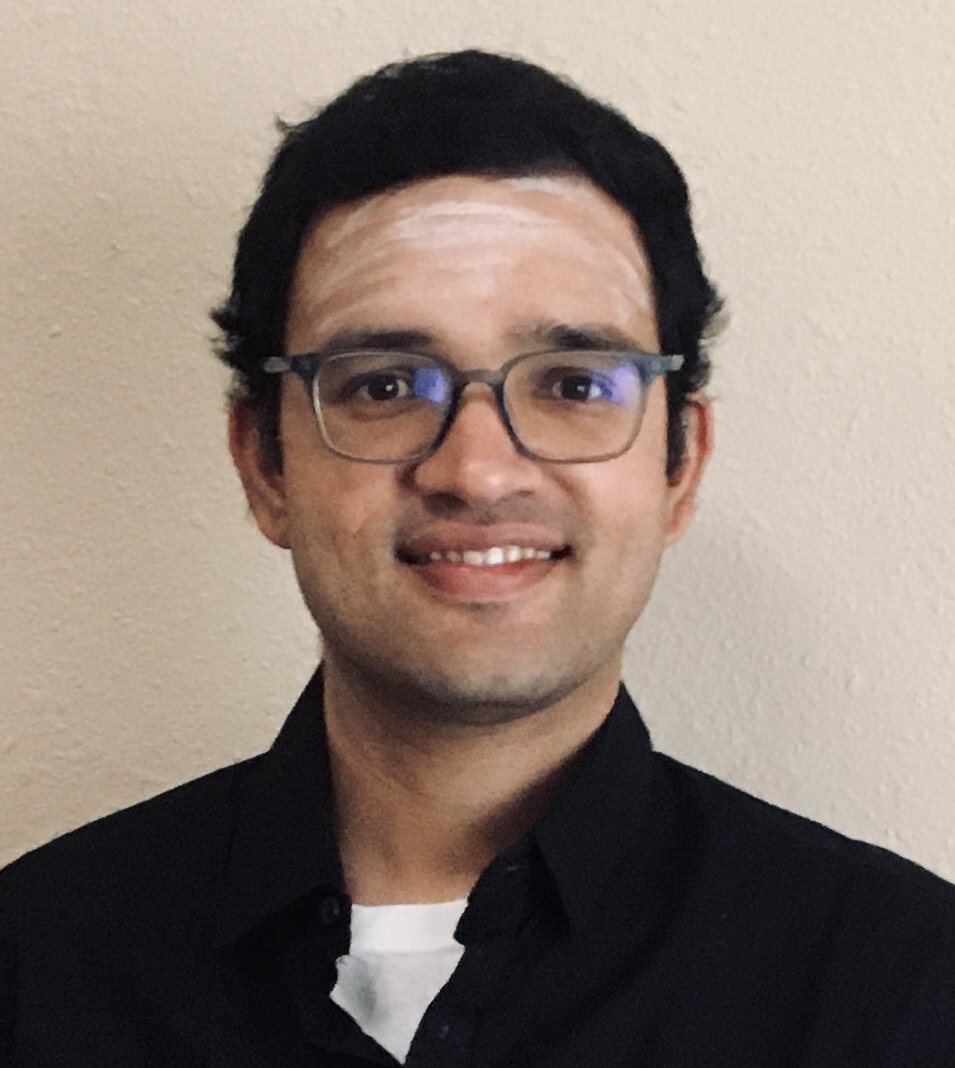}}
]{Kaarthik Sundar}
received the Ph.D. degree in mechanical engineering from Texas A\&M University, College Station, TX, USA, in 2016. He is currently a Research Scientist in the Information Systems and Modeling Group of the Analytics, Intelligence and Technology division at Los Alamos National Laboratory, Los Alamos, NM, USA. His research interests include problems pertaining to vehicle routing, path planning, and control for unmanned/autonomous systems; non-linear optimal control, estimation, and large-scale optimization problems in power and gas networks; combinatorial optimization; and global optimization for mixed-integer non-linear programs.
\end{IEEEbiography}
\vskip 0pt plus -1fil
\begin{IEEEbiography}[
{\includegraphics[width=1in,height=1.25in,clip,keepaspectratio]{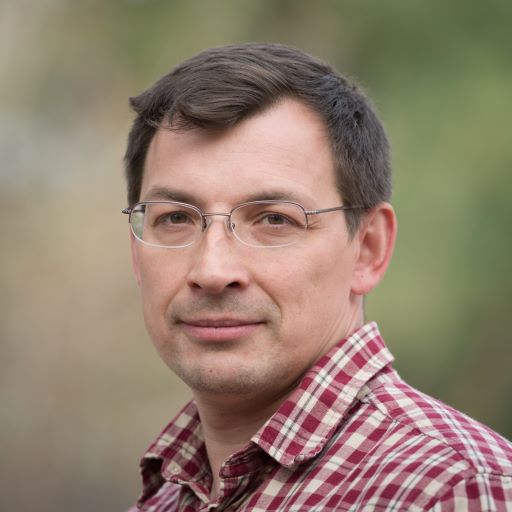}}
]{Vitaliy Gyrya} is a staff scientist in the Applied Mathematics and Plasma Physics group of the Theoretical Division at Los Alamos National Laboratory, where he was previously a postdoctoral associate at the Center for Nonlinear Studies.  
Before joining LANL in 2010, he obtained a Ph.D. in Applied Mathematics from The Pennsylvania State University, State College.
His research interest include compatible numerical discretizations and fluid flow problems.
\end{IEEEbiography}
\vskip 0pt plus -1fil
\begin{IEEEbiography}[
{\includegraphics[width=1in,height=1.25in,clip,keepaspectratio]{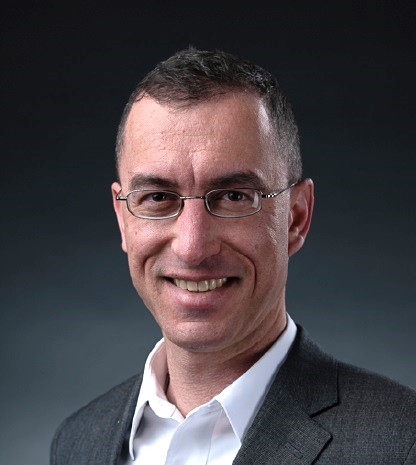}}
]{Anatoly Zlotnik} is a staff scientist in the Applied Mathematics and Plasma Physics group of the Theoretical Division at Los Alamos National Laboratory, where he was previously a postdoctoral associate at the Center for Nonlinear Studies.  Before joining LANL in 2014, he obtained a Ph.D. in systems science and mathematics from Washington University in St. Louis, Missouri, an M.S. in applied mathematics from the University of Nebraska – Lincoln, and B.S. and M.S. degrees in systems and control engineering from Case Western Reserve University in Cleveland, Ohio.  His research focus is on computational methods for optimal control of large-scale non-linear dynamic systems.
\end{IEEEbiography}

\end{document}